\newcommand{\noopsort}[1]{}
\definecolor{purple}{rgb}{0.6,0,0.6}
\definecolor{answer}{rgb}{0,0.5,0.2}
\newtheorem{theo}{Theorem}[section]
\newtheorem{lem}[theo]{Lemma}
\newtheorem{prop}[theo]{Proposition}
\newtheorem{cor}[theo]{Corollary}
\theoremstyle{definition}
\newtheorem{rem}[theo]{Remark}
\newtheorem{deftn}[theo]{Definition}
\begin{document}

\title{On Affine Tropical F5 Algorithms}

\author{
Tristan Vaccon}
  \affiliation{Universit\'e de Limoges;
  \institution{CNRS, XLIM UMR 7252}
  \city{Limoges, France}  
  \postcode{87060}  
  }
  \email{tristan.vaccon@unilim.fr}

\author{
Thibaut Verron}
  \affiliation{
  \institution{Johannes Kepler University\\Institute for Algebra }
  \city{Linz, Austria}  
}
\email{thibaut.verron@jku.at}

\thanks{The second author is supported by the Austrian FWF grant F5004.}

\author{Kazuhiro Yokoyama}
  \affiliation{Departement of Mathematics,
  \institution{Rikkyo University}
  \city{Tokyo, Japan}}
  \email{kazuhiro@rikkyo.ac.jp}

\begin{abstract}
Let $K$ be a field equipped with a valuation. Tropical varieties over $K$ can be defined with a theory of Gröbner bases taking into account the valuation of $K$.
Because of the use of the valuation, the theory of tropical Gröbner bases 
has proved to provide settings for 
computations over polynomial rings over a $p$-adic field
that are more stable than that of classical Gröbner bases.

Beforehand, these strategies were only available for homogeneous
polynomials. 
In this article, we extend the F5 strategy to a new definition of 
tropical Gröbner bases in an affine setting.

We provide numerical examples to illustrate
time-complexity and $p$-adic stability of this tropical F5 algorithm.
We also illustrate its merits as a first step before
an FGLM algorithm to compute (classical) lex bases over $p$-adics.

\end{abstract}

 \begin{CCSXML}
<ccs2012>
<concept>
<concept_id>10010147.10010148.10010149.10010150</concept_id>
<concept_desc>Computing methodologies~Algebraic algorithms</concept_desc>
<concept_significance>500</concept_significance>
</concept>
</ccs2012>
\end{CCSXML}

\ccsdesc[500]{Computing methodologies~Algebraic algorithms}


\vspace{-1.5mm}
\terms{Algorithms, Theory}
\keywords{Algorithms, Tropical geometry, Gröbner bases, F5 algorithm, $p$-adic precision}

\maketitle

\section{Introduction}

Tropical geometry as we understand it has not yet reached half a century of age.
It has nevertheless spawned significant applications to very various domains, from algebraic geometry to combinatorics, computer science, economics, non-archimedean geometry (see \cite{MS:2015}, \cite{EKL06}) and even attempts at proving the Riemann hypothesis (see \cite{Connes:2015b}).

Effective computation over tropical varieties make decisive use of Gröbner bases.
Since Chan and Maclagan's definition of tropical Gröbner bases taking into account the valuation in \cite{Chan:2013,CM:2013}, computations of 
tropical Gröbner bases are available over fields with trivial or non-trivial valuation, but only in a context of homogeneous ideals.

On the other hand, for classical Gröbner bases, numerous algorithms have been developed allowing for more and more efficient computations.
The latest generation of algorithms for computing Gröbner bases is the family of signature-based algorithms, which keep track of where the polynomials come from in order to anticipate useless reductions.
This idea was initiated in Algorithm F5 \cite{F5}, and has since then been widely studied and generalized (\cite{BFS14,EF17}).

Most of those algorithms, including the original F5 algorithm, are specifically designed for homogeneous systems, and adapting them to affine (or inhomogeneous) systems requires special care (see \cite{E13}).

An F5 algorithm computing tropical Gröbner bases without any trivial reduction to 0, inspired by the classical F5 algorithm, has been described in \cite{Vaccon:2017}.
The goal of this paper is to extend the definition of tropical Gröbner bases to inhomogeneous ideals, and describe ways to adapt the F5 algorithm in this new setting.

The core motivation is the following.
It has been proved~\cite{Vaccon:2015} that computing tropical Gröbner bases, taking into account the valuation, is more stable for polynomial ideals over a $p$-adic field than classical Gröbner bases.

Thus, an affine variant of tropical Gröbner bases is highly desirable to handle non-homogeneous ideals over $p$-adics.
For classical Gröbner bases, it is always possible to homogenize the input ideal, compute a homogeneous Gröbner basis, and dehomogenize the result.
This technique is not always optimal, because the algorithm may end up reaching a higher degree than needed, computing points at infinity of the system, but it always gives a correct result and, in the case of signature Gröbner basis algorithms, is able to eliminate useless reductions.
However, in a tropical setting, terms are ordered with a tropical term order, taking into account the valuation of the coefficients.
As far as we know it, there is no way to dehomogenize a system
 in a 
way that would preserve the tropical term order. 
Indeed, no tropical term order can be an elimination order.

Moreover, the FGLM algorithm can be adapted to the 
tropical case (see Chap. 9 of \cite{Vaccon:these}), making it possible to compute a lexicographical (classical) Gröbner basis from a tropical one.
We provide numerical data to estimate the loss in precision for the computation of a lex Gröbner basis using a tropical F5 algorithm followed by an FGLM algorithm, in an affine setting.

\vspace{-.2cm}

\subsection{Related works} 
A canonical reference for an introduction to computational 
tropical algebraic geometry is the book of Maclagan and Sturmfels 
\cite{MS:2015}.

The computation of tropical varieties over $\mathbb{Q}$ with trivial valuation is available in the Gfan package by Anders Jensen (see \cite{Jensen}), by using standard Gröbner bases computations.
Chan and Maclagan have developed in \cite{CM:2013} a Buchberger algorithm to compute tropical Gröbner bases for homogeneous entry polynomials (using a special division algorithm).
Following their work, still for homogeneous polynomials, a Matrix-F5 algorithm has been proposed in \cite{Vaccon:2015} and a Tropical F5 algorithm in \cite{Vaccon:2017}.
Markwig and Ren have provided a completely different technique of computation using projection of standard bases in \cite{MY:2015}, again only for homogeneous entry polynomials.

In the classical Gröbner basis setting, many techniques have been studied to make the computation of Gröbner bases more efficient.
In particular, Buchberber's algorithm is frequently made more efficient by using the \emph{sugar-degree} (see \cite{GMNRT91, BCM11}) instead of the actual degree for selecting the next pair to reduce.
This technique was a precursor of modern signature techniques, in the sense that the sugar-degree of a polynomial is exactly the degree of its signature.
General signature-based algorithms for computing classical Gröbner bases of inhomogeneous ideals have been extensively studied in~\cite{E13}.

\subsection{Specificities of computating tropical GB}

%

The computation of tropical GB, even by a Buchberger-style algorithm,
is not as straightforward as for classical Gröbner bases.
One way to understand this is the following: even for
homogeneous ideals, there is no equivalence between
tropical Gröbner bases and 
row-echelon linear bases at every degree.
Indeed,we can remark that
 $(f_1,f_2)=(x+y,2x+y)$ is a tropical 
 GB over $\mathbb{Q}[x,y]$ with $2$-adic valuation,
$w=[0,0]$ and grevlex ordering. 
Nevertheless, 
the corresponding $2 \times 2$ matrix in the vector space
of homogeneous polynomials of degree $2$ is not
 under tropical row-echelon form.  

As a consequence, reduction of a polynomial by
a tropical GB is not easy.
In \cite{Chan:2013,CM:2013}, Chan and Maclagan
relied on a variant of Mora's tangent cone algorithm
to obtain a division algorithm.
In \cite{Vaccon:2015, Vaccon:2017}, the authors relied 
on linear algebra and the computation of 
(tropical) row-echelon form.
In this article, we extend their method
to the computation of tropical Gröbner bases
in an affine setting, through an F5 algorithm.

\subsection{Main idea and results}

Extending the tropical F5 algorithm to inhomogeneous inputs poses two difficulties.
First, as mentioned, tropical Gröbner bases used to
be only defined  and computed for homogeneous systems.
Even barebones algorithms such as Buchberger's algorithm are not available for inhomogeneous systems.
The second problem is a general problem of signature Gröbner bases with inhomogeneous input.
The idea of signature algorithms is to compute polynomials with increasing signatures, and the F5 criterion detects trivial reductions to 0 by matching candidate signatures with existing leading terms.
For homogeneous ideals, the degree of the signature of a polynomial and the degree of the polynomial itself are correlated.
This is what makes the F5 criterion applicable.

The survey paper~\cite{E13} has shown that Algorithm F5, using the \emph{position over term}
ordering on the signatures, has to reach a tradeoff between eliminating all reductions to 0 and performing other useless reductions.

More precisely, let $f_{1},\dots,f_{m}$ be homogeneous polynomials with coefficients in a field with valuation $K$, and define $I_{k,d}$ the vector space of polynomials in $\langle f_{1},\dots,f_{k} \rangle$ with degree at most $d$.
With the usual computational strategy, the algorithm computes a basis of $I_{1,1}$, then $I_{2,1}$, and so on until $I_{m,1}$, and then $I_{1,2}$, and so on.
In a lot of situations \cite{BFS2004} ideals with more generators have a Gröbner basis with lower degree, and this strategy ensures that the algorithm does not reach a degree higher than needed.

However, the same algorithm for affine system will, at each step, merely compute a \emph{set of polynomials} in each $I_{k,d}$.
This set needs not be a generating set because of degree falls.
To obtain a basis instead, one has to proceed up to some $I_{k,\delta}$ with $\delta \geq d$.
When $\delta > d$, some polynomials will be missing for the F5 criterion in degree less than $\delta$, and the corresponding trivial reductions to 0 will not be eliminated.


In this paper, we show that the tropical F5 algorithm \cite{Vaccon:2017} works in an affine setting, and we characterize those trivial reductions to 0 which are eliminated by the F5 criterion.
In particular, we show that the Macaulay matrices built at each step of the computations are Macaulay matrices of all polynomials with a given \emph{sugar-degree}.

Compared to \cite{Vaccon:2017}, the overall presentation of the F5 algorithms is clarified. It can now be summarized as the following strategy: filtration, signature, F5 elimination criterion, Buchberger-F5 criterion and finally the F5 algorithm.

\begin{theo} \label{thm_intro}
  Given a set of (non-necessarily homogeneous) polynomials $f_{1},\dots,f_{m} \in K[X_{1},\dots,X_{n}]$, the Tropical F5 algorithm (Algorithm \ref{F5_algo_complete})
  computes a tropical Gröbner basis of $I$, without reducing to 0 any trivial tame syzygy (Def.~\ref{def:tame-syz}).
\end{theo}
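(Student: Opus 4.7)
The plan is to decouple the statement into three assertions---termination, correctness (the output is a tropical Gröbner basis), and the F5 elimination property for trivial tame syzygies---and handle them using the structural description of the Macaulay matrices advertised in the introduction: at step $(k,d)$ the matrix processed by the algorithm is, after removing the rows discarded by the F5 criterion, precisely the Macaulay matrix whose rows are all $t\cdot f_j$ with $j\leq k$ and sugar-degree at most $d$. Establishing this invariant is the backbone of the proof; everything else follows from it by induction on $(k,d)$ ordered lexicographically (or by the signature order induced by position-over-term and sugar-degree).

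For correctness I would proceed by induction on $(k,d)$. Assuming the invariant, putting the Macaulay matrix into tropical row-echelon form (as in \cite{Vaccon:2015,Vaccon:2017}) realizes the tropical reduction of every polynomial combination of sugar-degree $\leq d$ modulo the current partial basis. In particular every relevant $S$-polynomial appears, up to a row operation, as a row of some matrix, so the union of its tropical leading monomials generates the initial ideal in degree $\leq d$. Once $d$ exceeds the tropical Macaulay bound for the input---obtained by translating the homogeneous bound of \cite{Vaccon:2015} to the sugar-degree setting---the algorithm has produced a tropical Gröbner basis. Termination is then automatic since only finitely many sugar-degrees need to be processed. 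For the elimination assertion, I would argue that a row discarded by the F5 criterion is exactly in bijection with a trivial tame syzygy in the sense of Def.~\ref{def:tame-syz}: a candidate signature $t\cdot e_k$ whose monomial $t$ is divisible by the leading monomial of a previously produced polynomial of signature index $<k$ witnesses the corresponding Koszul-type relation at matching sugar-degree, and this is precisely the family the definition picks out.

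The main obstacle will be the \emph{soundness} of the F5 criterion in the tropical affine setting. Classically, soundness rests on the identification of signature degree with polynomial degree, which ensures that the F5 test at degree $d$ queries only information already present. In our setting signature and polynomial degrees are decoupled, and they are reconciled only through the sugar-degree filtration; moreover the tropical term order mixes the valuation into leading monomial selection, so one cannot naively transplant the classical argument. I expect the key lemma to be that tropical row-echelonization preserves the sugar-degree stratification: reducing a row of sugar-degree $d$ against rows of sugar-degree $\leq d$ never produces a leading monomial whose signature index exceeds $k$. This should be a direct adaptation of the degree-compatibility lemma of \cite{Vaccon:2017} once the sugar-degree is substituted for the usual degree throughout, at which point both the invariant and the syzygy characterization fall out simultaneously.
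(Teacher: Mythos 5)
Your three-way decomposition (termination, correctness, syzygy elimination) is reasonable, but the route you take has genuine gaps, and it is not the paper's route. Algorithm~\ref{F5_algo_complete} has no prescribed degree bound: it stops when the pair set $B$ is exhausted, so the proof must show that exhausting admissible pairs \emph{certifies} a tropical Gr\"obner basis. That is exactly the Buchberger-F5 criterion (Theorem~\ref{thm:F5crit} and its sugar-degree variant, Proposition~\ref{prop:F5crit_sugar_bounded}), whose proof --- normal pairs, the rewriting Lemma~\ref{lem:SGB_rewriting}, and the descent on a minimal representation $\beta$ adapted from Chan--Maclagan --- is the technical heart of the paper and is entirely absent from your proposal. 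Your substitute, that once $d$ exceeds a ``tropical Macaulay bound translated to sugar-degree'' the output is a tropical GB, is unjustified precisely because of degree falls: in the affine setting the span of the multiples $t f_j$ of sugar-degree $\leq d$ need not contain all ideal elements of degree $\leq d$, the paper stresses that no dehomogenization is compatible with a tropical term order, and no such bound is established or used anywhere. Termination in the paper comes instead from correctness combined with the finiteness of $\mathfrak{S}$-GBs (Proposition~\ref{prop:finiteness_S_basis}, via the Arri--Perry/Dickson argument), not from a degree count.

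Your backbone invariant also misdescribes what the algorithm computes: after Symbolic-Preprocessing-Rewritten the matrix contains one row per guessed signature (with rewriting), not all multiples $t f_j$ of sugar-degree $\leq d$, and a guessed signature $x^\alpha S_d(g)$ can be wrong, i.e.\ $S_d(x^\alpha g) <_{sign} x^\alpha S_d(g)$. The paper's induction (on sugar-degree, then $i$, then $x^\alpha e_i$) must show that such rows either have correct signature or reduce to zero harmlessly --- this is exactly where the ``soundness'' obstacle you flag lives, and your sketch never confronts it; the key lemma you propose (row-echelonization preserving the sugar-degree stratification) does not address it. Finally, the claimed bijection between F5-discarded rows and trivial tame syzygies is both too strong and unproven; what the theorem needs, and what the paper proves via the second half of Proposition~\ref{prop:F5elimination} together with the observation that a trivial syzygy with leading monomial $x^\alpha e_i$ has $x^\alpha \in LM(\langle f_1,\dots,f_{i-1}\rangle)$ (tameness supplying the degree condition so the detection occurs within the sugar-degree filtration), is only the one inclusion: every trivial tame syzygy is caught. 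As written, the proposal cannot yield the theorem without importing the $\mathfrak{S}$-GB and Buchberger-F5 machinery it bypasses.
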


We also examine an incremental affine version of the homogeneous tropical F5-algorithm and an affine tropical F4, and we compare their performances on several examples.
Even in a non-homogeneous setting, the loss in precision of the
tropical F5 algorithm remains satisfyingly low.

\subsection{Organization of the paper}
Section \ref{sec:Aff_Trop_GB} introduces notations and nonhomogeneous tropical Gröbner bases.
Section \ref{sec:Filtration} then introduces the filtration on ideals necessary for F5 algorithms in this context.
Section \ref{sec:BF5_criterion} is devoted to provide a Buchberger-F5 criterion on which Section \ref{sec:F5_algo} elaborates a first tropical F5 algorithm.
Section \ref{sec:Other_algo} briefly presents other methods for the computation of nonhomogeneous tropical Gröbner bases.
Finally, Section \ref{sec:Num_exp} displays numerical results related to the precision behaviour and time-complexity of the algorithms we have described.

\subsection*{Acknowledgements}
We thank Jean-Charles Faugère, Marc Mezzarobba, Pierre-Jean Spaenlehauer, Masayuki Noro,
and Naoyuki Shinohara for fruitful discussions.

\section{Affine Tropical GB}
\label{sec:Aff_Trop_GB}
%

\subsection{Notations} \label{subsec:Notations}
Let $k$ be a field with valuation  $val.$
The polynomial ring $k[X_1,\dots, X_n]$ will be denoted by $A.$ Let $T$ be the set of monomials of $A.$
For $u=(u_1,\dots,u_n) \in \mathbb{Z}_{\geq 0}^n$, we write $x^u$ for
$X_1^{u_1} \dots X_n^{u_n}$ and $\vert f \vert$ for the degree of $f \in A.$
In $A^s,$ $(e_i)_{i=1}^s$ is the canonical basis.

The matrix of a list of polynomials written in a basis of monomials
is called a \textit{Macaulay matrix}.

Given a mapping $\phi : U \rightarrow V,$ $Im(\phi)$
denotes the image of $\phi.$ For a matrix $M,$ $Rows(M)$ is the list
of its rows, and $Im(M)$ denotes
the left-image of $M$ (\textit{i.e.} $Im(M)=span(Rows(M)$).
For $w \in Im(val)^n \subset \mathbb{R}^n$ and $\leq_1$ a monomial order on $A,$
we define $\leq$ a tropical term order as in the following definition:

\begin{deftn} \label{defn:trop_term_order}
Given $a,b \in k^*$ and $x^\alpha$ and $x^\beta$ two monomials in $A$, 
we write $a x^\alpha < b x^\beta$ if:
\begin{itemize}
\item $\vert x^\alpha \vert < \vert x^\beta \vert,$ or
\item $\vert x^\alpha \vert = \vert x^\beta \vert,$
and
$val(a)+w \cdot \alpha > val(b) +w \cdot \beta$, or
$val(a)+w \cdot \alpha = val(b) +w \cdot \beta$ 
and $x^\alpha <_1 x^\beta.$
\end{itemize} 
For $u$ of valuation $0,$ we write $a x^\alpha =_{\leq} u a x^\alpha.$
Accordingly, $a x^\alpha \leq b x^\beta$
if $a x^\alpha < b x^\beta$ or $a x^\alpha =_{\leq} b x^\beta.$
\end{deftn}

Throughout this article, we are
interested in computing a tropical Gröbner basis
of $I=\left\langle f_1, \dots, f_s \right\rangle$
for some given $f_1,\dots, f_s \in A$ (ordered
increasingly by degree).

\subsection{Tropical GB}

A tropical term order provides an order on the terms of the polynomials $f \in A.$

\begin{deftn}
For $f \in A,$ we define $LT(f)$ to be the biggest 
term of $f.$ We define $LM(f)$ to be
the monomial corresponding to $LT(f)$ and
$LC(f)$ the corresponding coefficient.

We define $LM(I)$ to be the monomial ideal
generated by the monomials $LM(f)$ for $f \in I.$
\end{deftn}

We can then naturally define what is a tropical 
Gröbner basis (\textit{tropical GB} for short):

\begin{deftn}
$G \subset I$ is a tropical GB of $I$
if $\text{span}(LM(g) \text{ for } g \in G)= LM(I).$
\end{deftn}

We can remark that for homogeneous polynomials this definition coincide with that
given  in \cite{Vaccon:2017}.

\section{Filtration and $\mathfrak{S}$-GB}
\label{sec:Filtration}

\subsection{Definition and elimination criterion}

One of the main ingredient for F5 algorithms
is the definition of a vector space filtration of the ideal $I.$
It is defined from the initial polynomials $F=(f_1,\dots,f_s)$
generating $I.$ 
For simplicity, we assume that they are ordered
by increasing degree.

First, we extend $\leq$  
to the monomials of the vector space $A^s$.
To that intent, we highlight some monomials that appear as
leading monomial of a syzygy.

\begin{deftn}
  \label{def:tame-syz}
Let $(a_1,\dots,a_s) \in A^s$ and $i \in \{1,\dots,s\}$ be such that:
\begin{enumerate*}
\item $\sum_j a_j f_j=0.$
\item $a_i \neq 0$ and $a_j =0$ for $j>i.$
\item for all $j<i,$ $\vert a_j f_j \vert \leq \vert a_i f_i \vert.$
\end{enumerate*}

We call such a syzygy a \textit{tame syzygy} and
we define $LM(a_i)e_i$ to be its leading monomial.
We define $LM(TSyz(F))$ as the module
in $A^s$
generated by the leading monomials of the tame syzygies.
Trivial tame syzygies are the tame syzygies that are also trivial (\textit{i.e.} in
the module generated by the $f_i e_j-f_j e_i$).
\end{deftn}

The F5 criterion that we use in this article is designed
to recognize some of the tame syzygies
and use this knowledge to avoid useless reduction
to zero of some polynomials. It is the main
motivation for defining a filtration on the vector space $A^s.$
We use a degree-refining monomial ordering $\leq_m$ on $A.$\footnote{$\leq_m$ is not necessarily related to $\leq_1$ and $\leq$.} We define a total order on the monomials of $A^s.$

\begin{deftn}
We write that $x^\alpha e_i \leq_{sign} x^\beta e_j$ if:
\begin{enumerate}
\item if $i < j$, or
\item if $i = j$  and 
 $\vert x^\alpha f_i \vert < \vert x^\beta f_j \vert$, or
\item if $i = j$  and 
 $\vert x^\alpha f_i \vert = \vert x^\beta f_j \vert,$
and 
\begin{itemize}
\item $x^\alpha \notin LM(TSyz(F))$
and $x^\beta \in LM(TSyz(F))$, or
\item both $x^\alpha, x^\beta \in LM(TSyz(F))$ and $x^\alpha \leq_m x^\beta$, or
\item both  $x^\alpha, x^\beta \notin LM(TSyz(F))$ and $x^\alpha \leq_m x^\beta$. 
\end{itemize}
\end{enumerate} \label{deftn:order_for_the_filtration}
\end{deftn}

\begin{deftn} \label{def:ideal_filtration}
  We consider the vector space
  \[ I_{ \leq_{sign} x^\alpha e_i}:= Span ( \{ x^\beta f_j, \text{ s.t. } x^\beta e_j \leq_{sign} x^\alpha e_i \} ) \]
and the vector space $I_{<_{sign} x^\alpha e_i}$ defined accordingly.
We define 
$I= \bigcup_{\uparrow x^\alpha e_i } I_{\leq_{sign} x^\alpha e_i}$
as an increasing vector space filtration of $I.$
\end{deftn}
We then have a very natural definition of signature.
In litterature, this notion of signature 
is sometimes called \textit{minimal signature}.
\begin{deftn} \label{def:signature}
For $f \in I,$ the smallest $x^\alpha e_i$ such that
$f \in I_{\leq_{sign} x^\alpha e_i}$ is called the \textbf{signature}
of $f$ and noted $S(f).$
\end{deftn}
The degree $\vert x^\alpha f_i \vert$ is called the 
\textbf{sugar-degree} 
of $x^\alpha e_i.$
\footnote{Sugar-degree has been introduced and explored in \cite{GMNRT91, BCM11}.}
For the purpose of Algorithm \ref{F5_algo_complete},
we design a filtration compatible with the sugar-degree.

\begin{deftn} \label{def:sugar_ideal_filtration}
  We consider the vector space
  \[ I^{\leq  d}= Span ( \{ x^\beta f_j, \text{ s.t. } \vert x^\beta e_j \vert \leq d \} ) \]
We then define, for $x^\alpha e_i$ with sugar-degree $d$,
the vector space \\$I^{\leq  d}_{ \leq_{sign} x^\alpha e_i}= Span ( \{ x^\beta f_j, \text{ s.t. } x^\beta e_j \leq_{sign} x^\alpha e_i \text{ and } \vert x^\beta f_j \vert \leq d \} )$.
\end{deftn}
$I=\bigcup_{\uparrow d } I^{\leq  d}$ is also a vector space filtration.
$I^{\leq  d}$ can itself be filtrated by the $I^{\leq  d}_{ \leq_{sign} x^\alpha e_i}.$
We have a compatible notion of signature:
\begin{deftn} \label{def:signature_sugar}
For $d\in \mathbb{Z}_{>0}$ and $f \in I^{\leq d},$ the smallest $x^\alpha e_i$ such that
$f \in I^{\leq d}_{\leq_{sign} x^\alpha e_i}$ is called the $d$-\textbf{signature}
of $f$ and noted $S_d(f).$
\end{deftn}
We remark that $S_d(f)$ can be different from $S(f)$
for small $f$, but all $S_d(f)$ are equal when $d$
is large.

The main motivation for defining the vector spaces 
$I^{\leq  d}_{ \leq_{sign} x^\alpha e_i}$
is their finite dimension.
Their compatibility with the sugar-degree allows the F5 algorithm to compute only one Macaulay matrix by sugar-degree $d.$

The goal of the F5 criterion is to recognize
some $x^\alpha e_i$ such that the filtration is constant
at $I^{\leq d}_{\leq_{sign} x^\alpha e_i}$.
As a consequence, this knowledge allows to skip some calculation
as, because of this constancy, they will not provide any
new leading monomial.
We can then state a first version of the F5 elimination criterion:

\begin{prop}[\cite{F5}]
If $x^\alpha$ is such that $x^\alpha e_i \in LM(Tsyz(F)),$
$d \geq \vert x^\alpha f_i \vert,$
then 
the filtration is constant
at $I^{\leq d}_{\leq_{sign}  x^\alpha e_i}. $ \label{prop:F5elimination}
If $x^\alpha \in LM(I^{\leq d}_{\leq_{sign} x^\beta e_j})$ for some
$x^\beta$ and $j$ such that $\vert x^\beta f_j \vert \leq \vert x^\alpha\vert$,
then $x^\alpha e_i \in LM(Tsyz(F))$ for any $i>j.$ 
\end{prop}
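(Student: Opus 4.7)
I would prove the two implications of Proposition~\ref{prop:F5elimination} separately.

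For Part~1, let $(a_1, \dots, a_s)$ be a tame syzygy with leading monomial $x^\alpha e_i$, so that $LM(a_i) = x^\alpha$, $a_k = 0$ for $k>i$, $\sum_k a_k f_k = 0$, and $|a_k f_k| \leq |x^\alpha f_i|$ for all $k \leq i$. Isolating $x^\alpha f_i$ from the syzygy relation gives
\[
x^\alpha f_i = -LC(a_i)^{-1}\Bigl((a_i - LC(a_i)x^\alpha) f_i + \sum_{k<i} a_k f_k\Bigr).
\]
Each term $x^\gamma f_k$ on the right-hand side with $k<i$ has signature $x^\gamma e_k <_{sign} x^\alpha e_i$ by clause~(1) of Def.~\ref{deftn:order_for_the_filtration} and sugar $|x^\gamma f_k| \leq |a_k f_k| \leq |x^\alpha f_i| \leq d$. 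For the terms $x^\gamma f_i$ arising from the non-leading monomials of $a_i$, we have $x^\gamma <_m x^\alpha$ and, since $\leq_m$ is degree-refining, $|x^\gamma| \leq |x^\alpha|$; going through the three sub-cases of clause~(3) of Def.~\ref{deftn:order_for_the_filtration} while using crucially that $x^\alpha \in LM(TSyz(F))$, one obtains $x^\gamma e_i <_{sign} x^\alpha e_i$ in every sub-case. Hence $x^\alpha f_i \in I^{\leq d}_{<_{sign} x^\alpha e_i}$, and the filtration is constant at $x^\alpha e_i$.

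For Part~2, given $h \in I^{\leq d}_{\leq_{sign} x^\beta e_j}$ with $LM(h) = x^\alpha$ and $i > j$, I claim that the hypothesis $|x^\beta f_j| \leq |x^\alpha|$ allows one to write $h = \sum_{k \leq j} b_k f_k$ with $|b_k f_k| \leq |x^\alpha|$ for every $k$ (see the obstacle below). Granted this representation, I define a Koszul-type syzygy by $a_i := -h$ and $a_k := b_k f_i$ for $k \neq i$ (with the convention $b_k := 0$ for $k > j$). Then $\sum_k a_k f_k = -h f_i + f_i \sum_{k \leq j} b_k f_k = 0$; $a_k = 0$ for $k > i$ since $j < i$; $LM(a_i) = x^\alpha$; and for $k<i$, $|a_k f_k| = |b_k f_k| + |f_i| \leq |x^\alpha f_i|$. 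This is a (trivial) tame syzygy whose leading monomial is $x^\alpha e_i$, so $x^\alpha e_i \in LM(TSyz(F))$.

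The main obstacle is securing the bounded representation $h = \sum_{k \leq j} b_k f_k$ with $|b_k f_k| \leq |x^\alpha|$. Starting from an arbitrary expansion $h = \sum_\ell c_\ell x^{\gamma_\ell} f_{k_\ell}$ with $x^{\gamma_\ell} e_{k_\ell} \leq_{sign} x^\beta e_j$, the summands with $k_\ell = j$ automatically satisfy $|x^{\gamma_\ell} f_j| \leq |x^\beta f_j| \leq |x^\alpha|$ by clause~(2) of Def.~\ref{deftn:order_for_the_filtration}. Only summands with $k_\ell < j$ can exceed $|x^\alpha|$ through a degree fall, since for smaller module indices the signature order imposes no sugar constraint. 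I would handle these by induction on the maximum sugar $D'$ appearing among the $x^{\gamma_\ell} f_{k_\ell}$: whenever $D' > |x^\alpha|$, the top-degree pieces of the corresponding summands must cancel, which amounts to a lower-signature tame syzygy among the $f_k$ with $k < j$; substituting this syzygy into the expansion strictly lowers $D'$, and finitely many iterations drive it down to $|x^\alpha|$.
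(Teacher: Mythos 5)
Your overall skeleton is the paper's: for the first part you isolate $x^\alpha f_i$ from a tame syzygy, and for the second part you build a Koszul-type syzygy from a bounded representation of $h$. But both parts have a genuine gap.

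In Part~1, the claim that the non-leading monomials $x^\gamma$ of $a_i$ satisfy $x^\gamma <_m x^\alpha$ is unjustified: in Def.~\ref{def:tame-syz} the leading monomial of a tame syzygy is $LM(a_i)$, taken with respect to the tropical order $\leq$, and $\leq_m$ is explicitly allowed to be unrelated to $\leq$. So one can have $|x^\gamma|=|x^\alpha|$, $x^\gamma < x^\alpha$ tropically, $x^\gamma \in LM(TSyz(F))$ and $x^\gamma >_m x^\alpha$; in that sub-case of clause~(3) of Def.~\ref{deftn:order_for_the_filtration} one gets $x^\gamma e_i >_{sign} x^\alpha e_i$, and your ``strictly smaller signature in every sub-case'' fails. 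The paper closes exactly this hole by an extra step you omit: it first reduces $g$ completely, using the elements of $LM(TSyz(F))$ as pivots, so that $g$ may be assumed to have no monomial in $LM(TSyz(F))$; then the only possible sub-case is ``$x^\gamma \notin LM(TSyz(F))$, $x^\alpha \in LM(TSyz(F))$'', which gives $x^\gamma e_i <_{sign} x^\alpha e_i$ independently of $\leq_m$.

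In Part~2, you rightly identify the crux (a representation $h=\sum_{k\leq j} b_k f_k$ with $|b_k f_k| \leq |x^\alpha|$; membership in $I^{\leq d}_{\leq_{sign} x^\beta e_j}$ only bounds the sugar of the index-$j$ terms), but your proposed induction does not close it. A cancellation of the top-sugar pieces is a syzygy of the top-degree \emph{forms} of the $f_k$, not a tame syzygy of the $f_k$ themselves, and ``substituting'' it leaves a degree-fall element which in general cannot be rewritten with strictly smaller sugar --- that is precisely the degree-fall phenomenon at stake here. Concretely, take $f_1=x^2+x$, $f_2=x^2$, $f_3=z^2$ in $K[x,y,z]$ with $w=[0,0,0]$, grevlex and unit coefficients, $j=3$, $x^\beta=1$, $d=3$: then $h=yf_1-yf_2+f_3=xy+z^2$ lies in $I^{\leq 3}_{\leq_{sign} e_3}$, $LM(h)=x^\alpha=xy$, and $|x^\beta f_3|=2\leq|x^\alpha|$, yet any representation with $|b_kf_k|\leq 2$ forces all $b_k$ to be constants and cannot produce the monomial $xy$; so your $D'$ can never be pushed down to $|x^\alpha|$, and no argument of this shape can succeed when $d>|x^\alpha|$. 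The paper's proof instead takes the representation with $|a_k f_k|\leq|x^\beta f_j|\leq|x^\alpha|$ as given, i.e.\ it reads the hypothesis as exhibiting $x^\alpha$ as a leading monomial of the sugar-$\leq|x^\alpha|$ part of the filtration, which is how the criterion is invoked in the algorithm; with that bounded representation in hand, your Koszul construction coincides with the paper's and is correct.
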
 
\begin{proof}
For the first part, we can write $(x^\alpha + g)f_i = \sum_{j<i} a_j f_j,$
with $LT(g)<x^\alpha$ and for all $j<i,$ $\vert a_j f_j \vert \leq \vert x^\alpha f_i \vert$.
 Then:
\[x^\alpha f_i = (-g)f_i + \sum_{j=1}^{i-1} a_j f_j. \]

By linear algebra and a complete reduction using as pivot the
$x^\beta e_j \in LM(Tsyz(F)),$
we can assume that $g$ has no monomial in $LM(TSyz(F))$
and obtain: 
$x^\alpha f_i \in I^{\leq d}_{< x^\alpha e_i}, $
and therefore, the filtration is constant at $I^{\leq d}_{\leq  x^\alpha e_i}.$

For the second part, we can write $x^\alpha + g = \sum_{k \leq j} a_k f_k,$
with $LT(g)<x^\alpha$ and for all 
$k \leq j,$ $\vert a_j f_j \vert \leq \vert x^\beta f_j \vert \leq \vert x^\alpha \vert $.
Then $(x^\alpha+g)f_i-\sum_{k \leq j} (a_kf_i) f_k=0$
and we do have $\vert x^\alpha f_i \vert \geq \vert (a_kf_i) f_k \vert$ for 
all $k \leq j.$
\end{proof}

If all the $f_i$'s are homogeneous, this coincides with the usual F5 elimination
criterion, as for example stated in \cite{Vaccon:2017}, which eliminates all trivial reductions to zero in the course of the algorithm.
For affine polynomials, the F5 criterion only eliminates those trivial reductions which are tame.

\subsection{Tropical $\mathfrak{S}$-GB}

In order to take advantage of the F5 elimination criterion 
to compute tropical Gröbner bases, we focus on the 
computation of tropical Gröbner bases which are
compatible with the filtration: tropical $\mathfrak{S}$-GB.
We first need the definition of reductions compatible
with the filtration and the corresponding irreducible polynomials.

\begin{deftn}[$\mathfrak{S}$-reduction]
  
Let $e,g \in I,$ $h \in I.$
We say that $e$ 
$\mathfrak{S}$\textbf{-reduces} 
to $g$ with $h,$
$e \rightarrow^h_{\mathfrak{S}} g, $
if there are $t \in T$ and $\alpha \in k^*$ such that:
\begin{itemize}
\item $LT(g)<LT(e),$ $LM(g) \neq LM(e)$ and $e-\alpha th=g$ and
\item $S(th)<_{sign} S(e).$
\end{itemize} \label{def:S_red}
\end{deftn}


It is then natural to define what is an $\mathfrak{S}$-irreducible polynomial.

\begin{deftn}[$\mathfrak{S}$-irreducible polynomial]
We say that $g \in I$ is $\mathfrak{S}$-irreducible 
if there is no $h \in I$ which $\mathfrak{S}$-reduces $g$.
If there is no
ambiguity, we might omit the $\mathfrak{S}-.$ \label{def:S_irred}
\end{deftn}


\begin{deftn}[Tropical $\mathfrak{S}$-Gröbner basis]
We say that $G \subset I$, a set of $\mathfrak{S}$-irreducible polynomials, is a \textbf{tropical} $\mathfrak{S}$\textbf{-Gröbner basis} (or tropical $\mathfrak{S}-$GB, or just $\mathfrak{S}-$GB for short when there is no ambiguity) of $I$ with respect to a given tropical term order, if 
for each $\mathfrak{S}$-irreducible polynomial $h \in I,$
there exist $g \in G$ and $t \in T$ such that $LM(tg)=LM(h)$
and $tS(g)=S(h).$ \label{def:S-GB}
\end{deftn}

\begin{deftn}
Definitions \ref{def:S_red}, \ref{def:S_irred} and \ref{def:S-GB} have natural
analogues when one restricts to the vector space
$I^{\leq d}$ and $S_d$ with $\mathfrak{S}_d$-reduction, $\mathfrak{S}_d$-irreducible polynomial and tropical $\mathfrak{S}_d$-GB.
\end{deftn}

\begin{prop} \label{prop:reduction_by_S_basis}
If $G$ is a tropical $\mathfrak{S}$-Gröbner basis, then for any nonzero $h \in I,$ there exist $g \in G$ and $t \in T$ such that:
\begin{itemize}
\item $LM(tg)=LM(h)$
\item $S(tg)=tS(g)=S(h)$ if $h$ is irreducible, and $S(tg)=tS(g)<_{sign}S(h)$ otherwise.
\end{itemize}
Hence, there is an $\mathfrak{S}$-reductor for $h$ in $G$ when $h$ is not irreducible.
\end{prop}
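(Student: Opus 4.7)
The plan is to split into two cases based on whether $h$ is $\mathfrak{S}$-irreducible or not. If $h$ is $\mathfrak{S}$-irreducible, then Definition \ref{def:S-GB} directly supplies $g \in G$ and $t \in T$ with $LM(tg) = LM(h)$ and $tS(g) = S(h)$, so $S(tg) = tS(g) = S(h)$ because scalar multiples share a signature with their monomial counterparts. The real work lies in the reducible case, where we need $LM(tg) = LM(h)$ together with the strict inequality $tS(g) <_{sign} S(h)$.

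The key observation is this: if $h$ is $\mathfrak{S}$-reducible, then by Definition \ref{def:S_red} there exist $h'' \in I$, $t \in T$, $\alpha \in k^*$ with $e - \alpha t h'' =: g'$ satisfying $LT(g') < LT(h)$ and $S(th'') <_{sign} S(h)$. Since $LT(g') < LT(h)$ and $LM(g') \neq LM(h)$, the leading monomials of $h$ and $\alpha t h''$ must cancel; thus the polynomial $p := \alpha t h'' \in I$ satisfies $LM(p) = LM(h)$ and $S(p) = S(th'') <_{sign} S(h)$ (the signature is invariant under nonzero scalar multiplication). In other words, reducibility of $h$ produces another element of $I$ with the same leading monomial but a strictly smaller signature. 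So it suffices to prove the following strengthened statement by well-founded induction on the signature with respect to $<_{sign}$:

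\emph{For every nonzero $p \in I$ there exist $g \in G$ and $t \in T$ with $LM(tg) = LM(p)$ and $tS(g) \leq_{sign} S(p)$.}

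For the induction, the base case handles $\mathfrak{S}$-irreducible $p$ via Definition \ref{def:S-GB}, yielding equality $tS(g) = S(p)$. For the inductive step, $p$ reducible, the observation of the previous paragraph produces $q \in I$ with $LM(q) = LM(p)$ and $S(q) <_{sign} S(p)$; applying the induction hypothesis to $q$ gives $g \in G$ and $t \in T$ with $LM(tg) = LM(q) = LM(p)$ and $tS(g) \leq_{sign} S(q) <_{sign} S(p)$. The well-foundedness of $<_{sign}$ follows from the fact that it orders signatures first by the finite index $i$, then by sugar-degree (a well-order on $\mathbb{N}$), and finally by $\leq_m$ on the finite set of monomials of a given degree. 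Applied to the original $h$ (reducible), this yields $g$ and $t$ with $LM(tg) = LM(h)$ and $tS(g) <_{sign} S(h)$, completing the second bullet. The final ``reductor'' claim then follows by setting $\alpha := LC(h)/LC(tg)$: the leading terms of $h$ and $\alpha tg$ cancel, forcing $LT(h - \alpha t g) < LT(h)$ and $LM(h - \alpha t g) \neq LM(h)$, while $S(tg) = tS(g) <_{sign} S(h)$ is exactly the signature bound required by Definition \ref{def:S_red}.

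The main subtlety will be checking that the various instances of ``signature'' line up: specifically, that $S(\alpha t h'') = S(t h'')$ under scalar multiplication and that the signature of the polynomial $t h''$ produced in a reduction step matches the signature bound $S(th'') <_{sign} S(h)$ used in Definition \ref{def:S_red}. Everything else is bookkeeping once the well-founded induction is set up.
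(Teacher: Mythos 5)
The paper actually states this proposition without proof, so there is no official argument to compare against; your proof is correct and is essentially the argument the definitions are set up for: handle $\mathfrak{S}$-irreducible $h$ directly via Definition~\ref{def:S-GB}, and for reducible $h$ use Definition~\ref{def:S_red} to replace $h$ by $\alpha t h''$, which has the same leading monomial but strictly smaller signature, then conclude by well-founded induction on $<_{sign}$ (your justification of well-foundedness is right). Two points should be tightened. First, the equality $S(tg)=tS(g)$ is part of the claim, and your stated reason (``scalar multiples share a signature with their monomial counterparts'') does not give it; what does is Lemma~\ref{lem:signature-multiplication}: in the irreducible case $tS(g)=S(h)\in\Sigma$, hence $S(tg)=tS(g)$. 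You should also carry this equality inside your strengthened induction statement --- it is inherited verbatim in the inductive step since you pass along the same pair $(g,t)$ --- otherwise the second bullet is not literally delivered for reducible $h$. Second, in the tropical setting the phrase ``the leading monomials must cancel'' is loose: nothing need cancel. The correct argument is that every term of $g'=h-\alpha t h''$ is strictly below $LT(h)$, so the coefficient of $LM(h)$ in $g'$ is zero or has valuation strictly larger than $\val(LC(h))$; hence the coefficient of $LM(h)$ in $\alpha t h''$ has valuation exactly $\val(LC(h))$, and that term dominates all the others (which are bounded by terms of $h$ and $g'$ at other monomials), giving $LM(\alpha t h'')=LM(h)$. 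With this valuation check made explicit, and with the remark in the final reductor step that $h-\alpha t g\neq 0$ (since $h=\alpha t g$ would force $S(h)=S(tg)<_{sign}S(h)$), the proof is complete.
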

\begin{cor}
If $G$ is a tropical $\mathfrak{S}$-Gröbner basis, 
then $G$ is a tropical Gröbner basis of $I,$ for $<.$
\end{cor}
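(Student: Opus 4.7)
The plan is to derive the corollary as a direct consequence of Proposition~\ref{prop:reduction_by_S_basis}. The only thing to verify is that the existence of an $\mathfrak{S}$-reductor (in the signature-refined sense) implies the defining property of a tropical Gröbner basis (in the sense of Definition of tropical GB given earlier), namely that $\mathrm{span}(LM(g) \text{ for } g \in G) = LM(I)$.

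First I would fix an arbitrary nonzero $h \in I$, and apply Proposition~\ref{prop:reduction_by_S_basis} to it. The proposition yields some $g \in G$ and some term $t \in T$ such that $LM(tg) = LM(h)$, regardless of whether $h$ is $\mathfrak{S}$-irreducible or not (in both cases the proposition guarantees the leading-monomial equality; only the signature comparison differs between the two cases, and here we do not need any information about signatures). From $LM(tg) = LM(h)$ one immediately reads off that $LM(h) = LM(t) \cdot LM(g)$ lies in the monomial ideal generated by $\{LM(g) : g \in G\}$.

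Next I would observe that this shows $LM(I) \subseteq \mathrm{span}(LM(g) \text{ for } g \in G)$: every generator of $LM(I)$ is of the form $LM(h)$ for some nonzero $h \in I$, and each such $LM(h)$ has just been shown to be a monomial multiple of some $LM(g)$. The reverse inclusion $\mathrm{span}(LM(g) : g \in G) \subseteq LM(I)$ is immediate since $G \subset I$. Combining the two inclusions yields the required equality, so $G$ is a tropical Gröbner basis of $I$ with respect to $<$.

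There is no real obstacle here: the work has been done in Proposition~\ref{prop:reduction_by_S_basis}, and the only subtle point is to notice that the conclusion $LM(tg) = LM(h)$ of that proposition is already exactly the leading-monomial divisibility condition required by the tropical GB definition, so no additional argument about reductions, signatures, or the filtration is needed for this corollary.
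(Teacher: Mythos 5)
Your proposal is correct and matches the paper's intent: the corollary is stated as an immediate consequence of Proposition~\ref{prop:reduction_by_S_basis}, and your argument simply spells out that the leading-monomial equality $LM(tg)=LM(h)$ (valid in both cases of the proposition) gives $LM(h)=t\cdot LM(g)$, hence the inclusion $LM(I)\subseteq \langle LM(g) : g\in G\rangle$, the reverse inclusion being trivial from $G\subset I$. The only implicit step, that $LM(tg)=t\,LM(g)$ for $t\in T$, holds because the tropical term order of Definition~\ref{defn:trop_term_order} is compatible with multiplication by monomials, so no gap remains.
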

As a consequence computing a tropical $\mathfrak{S}$-GB provides
a tropical GB, and we can use the F5 elimination criterion  \ref{prop:F5elimination} to our advantage when computing
these tropical $\mathfrak{S}$-GB.
Moreover, we also have the following finiteness result:

\begin{prop} \label{prop:finiteness_S_basis}
Every tropical $\mathfrak{S}$-Gröbner basis contains a finite 
tropical $\mathfrak{S}$-Gröbner basis.
\end{prop}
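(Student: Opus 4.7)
The plan is to apply Dickson's lemma to the signatures appearing in $G$ and then upgrade the resulting finite subset to a tropical $\mathfrak{S}$-Gröbner basis via a uniqueness argument for $\mathfrak{S}$-irreducibles. A key preliminary observation is that any two $\mathfrak{S}$-irreducible polynomials $h_1, h_2 \in I$ sharing a signature $\sigma$ are proportional: the at-most-one-dimensionality of the filtration quotient $I_{\leq_{sign} \sigma}/I_{<_{sign} \sigma}$ furnishes $c \in k^*$ with $h_1 - ch_2 \in I_{<_{sign} \sigma}$; were this difference nonzero, its leading monomial would supply an $\mathfrak{S}$-reduction of one of $h_1, h_2$, contradicting irreducibility. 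Hence $g \mapsto S(g)$ is injective on $G$ up to scalar.

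Next, I would apply Dickson's lemma to $\{S(g) : g \in G\} \subset \bigsqcup_i T e_i$ componentwise to obtain finitely many divisibility-minimal signatures $\sigma_1, \ldots, \sigma_r$, and pick a representative $g_j \in G$ with $S(g_j) = \sigma_j$ for each $j$; this forms a finite subset $G' = \{g_1, \ldots, g_r\} \subset G$. To verify $G'$ is a tropical $\mathfrak{S}$-Gröbner basis, let $h \in I$ be $\mathfrak{S}$-irreducible. The $\mathfrak{S}$-GB property of $G$ yields $g \in G$ and $t \in T$ with $tS(g) = S(h)$ and $t \cdot LM(g) = LM(h)$; Dickson-minimality gives $u \in T$ and an index $j$ with $S(g) = u\sigma_j$, whence $S(h) = tu \cdot S(g_j)$. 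It then remains to establish $LM(h) = tu \cdot LM(g_j)$.

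The hard part will be precisely this leading-monomial alignment. I would attack it by $\mathfrak{S}$-reducing $ug_j$ to an $\mathfrak{S}$-irreducible form: its signature equals $u\sigma_j = S(g)$ and cannot drop under $\mathfrak{S}$-reduction, since the class of $ug_j$ in the $(u\sigma_j)$-graded piece of the filtration is nonzero and cannot be cancelled by subtracting elements of strictly smaller signature. By the uniqueness established above, the fully reduced form is proportional to $g$, so $u \cdot LM(g_j) \geq LM(g)$ in the tropical order, and equality delivers the required identity. The subtle case is when this inequality is strict: one must then enlarge $G'$ with elements of $G$ that realise the lowered leading monomials appearing during the reductions. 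A second application of Dickson's lemma inside the sugar-degree filtration of Definition~\ref{def:sugar_ideal_filtration} --- controlling the finitely many jumps in $LM$ that can arise at each bounded sugar-degree, since $LM(I^{\leq d})$ is a finitely-generated monomial ideal for every $d$ --- confirms that only finitely many such enlargements are needed. Combining the two layers of Dickson produces the desired finite tropical $\mathfrak{S}$-Gröbner basis contained in $G$.
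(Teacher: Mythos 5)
The decisive step of your argument is exactly the one you postpone: what to do when the $\mathfrak{S}$-irreducible reduction of $u g_j$ has leading monomial strictly smaller than $u\cdot LM(g_j)$. Applying Dickson's lemma to the signatures alone cannot control this, because the leading monomial attached to a signature is not multiplicative: writing $\lambda(\sigma)$ for the common leading monomial of the $\mathfrak{S}$-irreducible elements of signature $\sigma$, one only gets $\lambda(t\sigma)\leq t\,\lambda(\sigma)$, and the inequality may be strict for infinitely many multiples $t\sigma$ of a divisibility-minimal signature. Your proposed repair --- enlarging $G'$ by elements of $G$ realising the dropped leading monomials, justified by ``a second application of Dickson's lemma inside the sugar-degree filtration'' --- is not a proof: the fact that $LM(I^{\leq d})$ is finitely generated for each \emph{fixed} $d$ gives nothing uniform in $d$, the sugar-degrees of the signatures to be covered are unbounded, and every newly added element can itself create further signatures whose reductions drop again; nothing in your sketch shows that this enlargement process terminates. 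That termination is precisely the content of the proposition, so the gap is essential. The paper sidesteps it by invoking Proposition 14 of Arri and Perry, where the Dickson-type argument is applied to the monomial data $(S(g),LM(g))$ \emph{jointly} (a statement about monomial submodules, hence transposable directly to the tropical order); keeping only the signature component, as you do, discards exactly the information needed at your ``subtle case''.

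A secondary point: your preliminary observation is overstated. Two $\mathfrak{S}$-irreducible polynomials with the same signature need not be proportional. If $h$ is irreducible with $S(h)=\sigma$ and $r\in I$ is nonzero with $S(r)<_{sign}\sigma$, with all terms smaller than $LT(h)$ and support avoiding $LM(h)$, then $h+r$ is again irreducible of signature $\sigma$ (whether a polynomial admits an $\mathfrak{S}$-reduction depends only on its leading monomial and signature, since a reduction must cancel the leading term by some $th'$ with $LM(th')=LM(h)$ and $S(th')<_{sign}S(h)$), yet $h+r$ is not a scalar multiple of $h$. What your one-dimensionality argument does prove --- and all you actually use later --- is that equal-signature irreducibles share the same leading monomial; so this flaw is repairable, unlike the termination gap above.
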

\begin{proof}
We refer to the proof of Proposition 14 of \cite{Arri-Perry}.
It uses an adapted Dickson's Lemma and since it is mostly a question
of monomial ideals, the transposition to the tropical setting is direct.
\end{proof}

\subsection{Linear algebra and existence}

For $x^\alpha \in T$ and $1 \leq i \leq n,$ 
let us denote by $Mac_{\leq_{sign x^\alpha e_i }}(F)$
the Macaulay matrix of the polynomials $x^\beta f_j$ such 
that $x^\beta f_j \leq x^\alpha f_i,$
ordered increasingly for the order on the $x^\beta e_j$'s.
One can perform a tropical LUP algorithm on $Mac_{\leq d}(F)$
(see Algo. \ref{algo:trop LUP}) and obtain all the leading monomials
in $I_{\leq_{sign} x^\alpha e_i}$.
This can be (theoretically) performed for all $x^\alpha e_i$ to
obtain the existence of an $\mathfrak{S}$-GB of $I.$

\subsection{More on signatures}

%

We define $\Sigma$ to be the set of signatures.

Thanks to Proposition \ref{prop:F5elimination},
not all $x^\alpha e_i$ can be a signature:
\begin{rem}
If $x^\alpha e_i \in LM(TSyz(F))$ then $x^\alpha e_i \notin \Sigma.$
 \label{rem:constancy_and_signature}
\end{rem}

%

We provide two lemmata to understand the compatibility 
of $\Sigma$ with basic operations on polynomials.

\begin{lem}
If $f,g \in I$ are such that $S(f)=S(g)$ and $LM(f) \neq LM(g)$, then there exist $a,b \in k^*$ such that $S(af+bg)<S(f)$ and $af+bg \neq 0$. \label{lem:baisse_signature}
\end{lem}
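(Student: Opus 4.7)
My plan is to exploit the fact that the filtration $\{I_{\leq_{sign} x^\alpha e_i}\}$ from Definition~\ref{def:ideal_filtration} jumps by at most one dimension at each step of the $\leq_{sign}$ order, so that the ``leading'' part of any element with a given signature is determined up to a scalar.

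Let $x^\alpha e_i := S(f) = S(g)$. First I would observe that $\leq_{sign}$ is a total order on the monomials of $A^s$, so the only generator of $I_{\leq_{sign} x^\alpha e_i}$ that is not already in $I_{<_{sign} x^\alpha e_i}$ is $x^\alpha f_i$ itself. Consequently
\[
 I_{\leq_{sign} x^\alpha e_i} \;=\; I_{<_{sign} x^\alpha e_i} \;+\; k\cdot x^\alpha f_i,
\]
and so $\dim\bigl( I_{\leq_{sign} x^\alpha e_i} / I_{<_{sign} x^\alpha e_i} \bigr) \leq 1$.

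Next I would use the definition of signature (Definition~\ref{def:signature}) to write $f = \lambda\, x^\alpha f_i + f'$ and $g = \mu\, x^\alpha f_i + g'$ with $f',g' \in I_{<_{sign} x^\alpha e_i}$ and $\lambda,\mu \in k$. Since $S(f) = x^\alpha e_i$ forbids $f \in I_{<_{sign} x^\alpha e_i}$, the scalar $\lambda$ must be nonzero; analogously $\mu \neq 0$. Setting $a := \mu$ and $b := -\lambda$, both in $k^*$, the element
\[
 af + bg \;=\; \mu f' - \lambda g' \;\in\; I_{<_{sign} x^\alpha e_i},
\]
so that $S(af+bg) <_{sign} x^\alpha e_i = S(f)$, giving the first assertion.

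Finally, I would rule out $af+bg = 0$: this would force $g = -(a/b)f$, and in particular $LM(g)=LM(f)$ since a nonzero scalar in $k$ does not alter the leading monomial under the tropical term order of Definition~\ref{defn:trop_term_order}. This contradicts the hypothesis $LM(f)\neq LM(g)$, so $af+bg \neq 0$ as required. The only mildly delicate point is the one-dimensional-quotient argument, which relies on $\leq_{sign}$ being a total order; everything else is formal manipulation from the definitions.
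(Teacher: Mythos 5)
Your argument is correct and is exactly the ``linear algebra point of view'' the paper invokes without elaboration: since $I_{\leq_{sign} x^\alpha e_i} = I_{<_{sign} x^\alpha e_i} + k\cdot x^\alpha f_i$, the quotient is at most one-dimensional, so a suitable combination of $f$ and $g$ cancels the top contribution and drops the signature, while $af+bg=0$ is excluded because scalars do not change tropical leading monomials. So you have simply written out in full the proof the paper declares ``direct''.
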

If one takes the point of view of linear algebra, the proof is direct.

\begin{lem}
  \label{lem:signature-multiplication}
If $g \in I$ and $\tau \in T$
then $S(\tau g)\leq \tau S(g).$
If moreover $\tau S(g) \in \Sigma$, then 
$S(\tau g)=\tau S(g)$ and for all $\mu \in T$ such that $\mu$ divides $\tau$, $S(\mu g) = \mu S(g)$.
\end{lem}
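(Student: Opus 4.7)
The plan is to unpack signatures through explicit module representations and track how multiplication by $\tau$ interacts with $\leq_{sign}$. Fix $S(g) = x^\alpha e_i$ and a representation $g = \sum_j a_j f_j$ whose module element $(a_j)_j \in A^s$ has $\leq_{sign}$-leading monomial exactly $x^\alpha e_i$; multiplying gives $\tau g = \sum_j (\tau a_j) f_j$ with scaled module element $\tau(a_j)_j$. A case-by-case inspection of Definition~\ref{deftn:order_for_the_filtration} shows that $x^\beta e_j \leq_{sign} x^\alpha e_i$ implies $\tau x^\beta e_j \leq_{sign} \tau x^\alpha e_i$ in every configuration but one ``flip'': within case~3 (same position, same sugar-degree), one can have $x^\beta, x^\alpha \notin LM(TSyz(F))$ with $x^\beta \leq_m x^\alpha$ (the third sub-case of case~3), but the multiplied pair land in the first sub-case with reversed roles because $\tau x^\beta \in LM(TSyz(F))$ while $\tau x^\alpha \notin LM(TSyz(F))$. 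Remark~\ref{rem:constancy_and_signature} rules out the other original sub-cases (since $x^\alpha e_i \in \Sigma$), so this flip is the sole obstruction. When it occurs, the new leading monomial $\tau x^\beta e_i$ lies in $LM(TSyz(F))$, so Proposition~\ref{prop:F5elimination} makes the filtration constant at $\tau x^\beta e_i$ within its sugar-degree, and $\tau g$ admits a representation with strictly smaller $\leq_{sign}$-leading. Iterating this replacement (which terminates since $\leq_{sign}$ well-orders monomials of bounded sugar-degree) yields $S(\tau g) \leq_{sign} \tau S(g)$.

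For the equality under the hypothesis $\tau S(g) \in \Sigma$, we have $\tau x^\alpha e_i \notin LM(TSyz(F))$ by Remark~\ref{rem:constancy_and_signature}. Assume $S(\tau g) <_{sign} \tau x^\alpha e_i$ for contradiction, so $\tau g \in I_{<_{sign} \tau x^\alpha e_i}$. Writing $\tau x^\alpha f_i = \tau g - \tau(a_i - x^\alpha) f_i - \sum_{j<i} \tau a_j f_j$ and bounding each summand---via the first inequality and, when a flip arises inside $\tau(a_i - x^\alpha) f_i$, rewriting the offending $\tau x^\delta f_i$ using a tame syzygy whose leading divides $\tau x^\delta$---one deduces $\tau x^\alpha f_i \in I_{<_{sign} \tau x^\alpha e_i}$. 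Following the construction in the proof of Proposition~\ref{prop:F5elimination}, this produces a tame syzygy with leading monomial $\tau x^\alpha e_i$, placing $\tau x^\alpha e_i$ in $LM(TSyz(F))$ and contradicting the hypothesis; hence $S(\tau g) = \tau S(g)$.

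For the divisor clause, write $\tau = \mu \mu'$ with $\mu \mid \tau$; applying the first inequality twice gives $\tau S(g) = S(\tau g) \leq_{sign} \mu' S(\mu g) \leq_{sign} \mu' \mu S(g) = \tau S(g)$, all inequalities collapse to equalities, and since multiplication by the monomial $\mu'$ is injective on monomials of $A^s$, we conclude $S(\mu g) = \mu S(g)$. The main obstacle throughout is the flip configuration: the signature order is not preserved by $\tau$-multiplication on the nose, because the $LM(TSyz(F))$ partition of monomials can shift under multiplication. Proposition~\ref{prop:F5elimination} is precisely the tool that rescues the argument, as any flipped monomial sits on a flat segment of the filtration and can be rewritten via a suitable tame syzygy.
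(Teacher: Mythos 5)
Your case analysis of Definition~\ref{deftn:order_for_the_filtration} is right: writing $S(g)=x^\alpha e_i$, Remark~\ref{rem:constancy_and_signature} gives $x^\alpha e_i\notin LM(TSyz(F))$, and the only way multiplication by $\tau$ can break an inequality $x^\beta e_i\leq_{sign}x^\alpha e_i$ is the flip you isolate. But the way you close the flip has a genuine gap. Proposition~\ref{prop:F5elimination} only yields $\tau x^\beta f_i\in I^{\leq d}_{<_{sign}\tau x^\beta e_i}$, i.e.\ a rewriting by monomials strictly below the \emph{flipped} monomial, not below $\tau x^\alpha e_i$: such a rewriting can introduce monomials $x^\delta e_i\notin LM(TSyz(F))$ of the same sugar-degree with $x^\delta>_m\tau x^\alpha$, which satisfy $x^\delta e_i>_{sign}\tau x^\alpha e_i$ and on which Proposition~\ref{prop:F5elimination} gives no handle. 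So your iteration does terminate, but possibly with a leading monomial strictly above $\tau S(g)$; termination alone does not give $S(\tau g)\leq_{sign}\tau S(g)$, and nothing in your argument excludes the stalled configuration. It is in fact realizable: with $p=2$, $w=0$, $\leq_m$ graded lex, $f_1=xy+2x^2$, $f_2$ a coprime quadric, $g=(xz+y^2)f_2$ and $\tau=x$, every attempt to rewrite $xy^2f_2$ using $\langle f_1\rangle$ and multiples $x^\gamma f_2$ with $x^\gamma e_2\leq_{sign}x^2z\,e_2$ forces incompatible conditions on the coefficients of $x^3$, $x^2y$, $xy^2$, so the rewriting you rely on simply does not exist there; closing the argument needs an additional idea, not a more careful iteration. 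The same unproved monotonicity of $\leq_{sign}$ under multiplication is then used again, silently, when you place $\tau(a_i-x^\alpha)f_i$ and $\tau a_jf_j$ in $I_{<_{sign}\tau x^\alpha e_i}$ in the second part, and when you multiply $S(\mu g)\leq_{sign}\mu S(g)$ by $\mu'$ in the divisor clause.

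Your route also differs from the paper's. The paper declares the inequality ``direct'' (implicitly using exactly the compatibility with multiplication that your flip analysis questions), proves the equality not by manufacturing a syzygy but by using $\tau S(g)\in\Sigma$ to produce an irreducible $h$ with $S(h)=\tau S(g)$ and writing $\tau g=h+r$ with $r\in I_{<_{sign}\tau S(g)}$, and handles the divisor clause by the same contradiction you use. Concerning your second part, the detour ``this produces a tame syzygy with leading monomial $\tau x^\alpha e_i$'' is both unjustified and unnecessary: the relation you build need not satisfy the degree bounds of Definition~\ref{def:tame-syz} at positions $j<i$ (the order $\leq_{sign}$ puts no degree restriction across positions), and the tropical leading monomial of its $i$-th component need not be $\tau x^\alpha$, since comparisons at equal sugar-degree are governed by $\leq_m$ and $LM(TSyz(F))$, not by the tropical order; moreover, once you have $\tau x^\alpha f_i\in I_{<_{sign}\tau x^\alpha e_i}$ the filtration is constant at $\tau x^\alpha e_i$, so no element of $I$ can have that signature, which already contradicts $\tau S(g)\in\Sigma$ without any appeal to tame syzygies.
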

\begin{proof}
The first part is direct.
For the second part, one can show that it is 
possible to write that
$\tau g = h +r$
for some $h \in I$ of signature $\tau S(g),$
irreducible, and $r \in I_{<_{sign}  \tau S(g)}$
and conclude that $S(\tau g)=\tau S(g).$

For the last statement, assume that there exists a $\mu \in T$ dividing $\tau$ such that $S(\mu g) < \mu S(g)$.
Then $S(\tau g) = S(\frac{\tau}{\mu}\mu g)\leq \frac{\tau}{\mu} S(\mu g) < \frac{\tau}{\mu}\mu S(g) = \tau S(g)$, which is a contradiction.
\end{proof}

\section{Buchberger-F5 criterion} 
\label{sec:BF5_criterion}

In this section, we explain a criterion, the Buchberger-F5 criterion, on
which we build our F5 algorithm to compute 
tropical $\mathfrak{S}$-Gröbner bases.
It is an analogue of the Buchberger criterion which includes 
the F5 elimination criterion.

We need a slightly different notion of $S$-pairs, called here normal pairs
and can then state the Buchberger-F5 criterion.

\begin{deftn}[Normal pair]
Given $g_1,g_2 \in I,$ let $Spol(g_1,g_2)=u_1 g_1-u_2 g_2$ be the $S$-polynomial of
$g_1$ and $g_2,$ with for $i\in \{1,2\}$, $u_i = \frac{lcm(LM(g_1),LM(g_2))}{LT(g_i)}.$ 
We say that $(g_1,g_2)$
is a \textbf{normal pair} if:
\begin{enumerate}
\item the $g_i$'s are $\mathfrak{S}$-irreducible polynomials.
\item $S(u_i g_i)=LM(u_i)S(g_i)$ for $i=1,2.$
\item $S(u_1 g_1) \neq S(u_2 g_2).$
\end{enumerate} 
We define accordingly $d$-normal pairs in $I^{\leq d}.$\label{def:normal_pair}
\end{deftn}

\begin{theo}[Buchberger-F5 criterion]
Suppose that $G$ is a finite set of $\mathfrak{S}$-irreducible polynomials of $I=\left\langle f_1, \dots, f_s \right\rangle$ such that:
\begin{enumerate}
\item for all $\forall i \in \llbracket 1, s \rrbracket,$ there exists $g \in G$ such that $S(g)=e_i.$
\item for any $g_1,g_2 \in G$ such that $(g_1,g_2)$ is a normal pair, there exists
$g \in G$ and $t \in T$ such that $tg$ is $\mathfrak{S}$-irreducible and 
$tS(g)=S(tg)=S(Spol(g_1,g_2)).$
\end{enumerate}
Then $G$ is a $\mathfrak{S}$-Gröbner basis of $I.$ 
The analogue result using $d$-normal pairs to recognize an $\mathfrak{S}_d$-GB in $I^{\leq d}$ is also true.  \label{thm:F5crit}
\end{theo}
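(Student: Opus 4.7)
My plan is to argue by contradiction, using strong induction on the well-ordered set of signatures. Assume $G$ satisfies conditions (1) and (2) but is not an $\mathfrak{S}$-GB; then there is an $\mathfrak{S}$-irreducible $h \in I$ not covered by $G$, in the sense that no pair $(g,t) \in G \times T$ satisfies $LM(tg) = LM(h)$ and $tS(g) = S(h)$. Choose such an $h$ with $S(h)$ minimal, so that every $\mathfrak{S}$-irreducible $h' \in I$ with $S(h') <_{sign} S(h)$ is covered.

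The first step is to show that every $p \in I$ admits a representation $p = \sum_j \alpha_j t_j g_j$ with $g_j \in G$, $\alpha_j \in k^*$, $t_j \in T$. This follows by an auxiliary induction on $S(p)$: writing $S(p) = x^\beta e_i$, we have $p = \lambda x^\beta f_i + r$ with $\lambda \in k^*$ and $r \in I_{<_{sign} x^\beta e_i}$, and condition (1) furnishes $g_i \in G$ with $S(g_i) = e_i$, so that $x^\beta f_i = c\, x^\beta g_i + (\text{strictly lower signature})$, the correction being absorbed into $r$ and handled by the inductive hypothesis. In particular $h$ has such a representation; among all of them, I select one minimizing the lexicographic pair $(\tau, N_\tau)$, where $\tau = \max_j LM(t_j) S(g_j)$, $J_\tau = \{j : LM(t_j) S(g_j) = \tau\}$ and $N_\tau = |J_\tau|$. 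By Lemma~\ref{lem:signature-multiplication}, $S(t_j g_j) \leq LM(t_j) S(g_j)$, with equality whenever $LM(t_j) S(g_j) \in \Sigma$.

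The key reduction is to show $\tau = S(h)$. If $\tau > S(h)$, the signature-$\tau$ summands must sum into $I_{<_{sign}\tau}$, forcing $N_\tau \geq 2$; pick $j_1, j_2 \in J_\tau$, whose common top signature implies $S(g_{j_\ell}) = x^{\alpha_\ell} e_k$ for a shared index $k$, with $t_{j_1} x^{\alpha_1} = t_{j_2} x^{\alpha_2}$ equal to the monomial part of $\tau$. When $(g_{j_1}, g_{j_2})$ is a normal pair, condition (2) supplies $(g, t) \in G \times T$ with $tg$ $\mathfrak{S}$-irreducible and $tS(g) = S(Spol(g_{j_1}, g_{j_2}))$; substituting via the identity $u_1 g_{j_1} - u_2 g_{j_2} = Spol(g_{j_1}, g_{j_2})$ (after scaling by the common monomial aligning $t_{j_\ell}$ with $u_\ell$) yields a representation with strictly smaller $(\tau, N_\tau)$, contradicting minimality. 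Once $\tau = S(h)$ is forced, pick $j_0 \in J_\tau$ with $LM(t_{j_0} g_{j_0})$ maximal; the $\mathfrak{S}$-irreducibility of $h$, together with Lemma~\ref{lem:baisse_signature} applied to pairs of same-signature summands with distinct leading monomials (which could otherwise be merged to lower $S(h)$), forces $LM(t_{j_0} g_{j_0}) = LM(h)$, so $(g_{j_0}, t_{j_0})$ covers $h$, contradicting the choice of $h$. The $\mathfrak{S}_d$-GB analogue in $I^{\leq d}$ is proved identically, all signatures being restricted to sugar-degree $\leq d$.

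The main obstacle is the case in which $(g_{j_1}, g_{j_2})$ fails to be a normal pair. Two failure modes must be handled: a signature drop $S(u_i g_{j_i}) <_{sign} LM(u_i) S(g_{j_i})$, in which case Proposition~\ref{prop:F5elimination} places the offending monomial in $LM(TSyz(F))$ and hence outside $\Sigma$, so such a term cannot appear in a $(\tau,N_\tau)$-minimal representation and can be eliminated; or the coincidence $S(u_1 g_{j_1}) = S(u_2 g_{j_2})$, which forces $Spol(g_{j_1}, g_{j_2})$ to have signature strictly below $\tau$, so that Lemma~\ref{lem:baisse_signature} combined with the inductive hypothesis already supplies a lower-signature reductor. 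Verifying that in every failure mode the rewriting strictly decreases $(\tau, N_\tau)$ — without inadvertently introducing new top-signature terms — is the delicate bookkeeping at the heart of the proof.
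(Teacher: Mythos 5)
Your Step 1 and the reduction to $\tau = S(h)$ reproduce, in essence, the easy half of the argument: condition (1) plus minimality of the signature yields a summand $t_{j_0}g_{j_0}$ with $t_{j_0}S(g_{j_0}) = S(h)$ (indeed your Step-1 construction already gives $\tau = S(h)$ and $N_\tau = 1$ directly, so the whole $\tau >_{sign} S(h)$ case is superfluous). The genuine gap is the last step, which is where all the difficulty of the theorem lives. Minimizing only the signature data $(\tau, N_\tau)$ gives you no control whatsoever over the \emph{leading monomials} of the summands: the summands of lower guessed signature may have leading monomials strictly larger than $LM(h)$ which cancel among themselves and against $LT(t_{j_0}g_{j_0})$. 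Your assertion that $\mathfrak{S}$-irreducibility of $h$ together with Lemma~\ref{lem:baisse_signature} forces $LM(t_{j_0}g_{j_0}) = LM(h)$ is a non sequitur (Lemma~\ref{lem:baisse_signature} lowers signatures of combinations of \emph{equal-signature} polynomials; it says nothing about a top leading monomial reached only by lower-signature summands), and nothing in your sketch ever brings condition (2) to bear on this situation. Yet resolving exactly this leading-term cancellation is what condition (2) is for: the paper, after producing the same-signature candidate, takes a representation $P=\sum h_ig_i$ via Lemma~\ref{lem:SGB_rewriting} (which needs the induction hypothesis), minimizes $\beta = \max_i LT(h_ig_i)$ (existence of the minimum via Lemma 2.10 of Chan--Maclagan), and shows $LT(P) < \beta$ is contradictory by rewriting the two top terms through $Spol(g_1,g_2)$, using Lemmas~\ref{lem:factor_Spoly} and~\ref{lem:signature-multiplication} to check the pair is normal and condition (2) (or the induction hypothesis, if the signature drops) to re-expand with smaller $\beta$ or fewer terms attaining it. Without this second minimization, or an equivalent mechanism, your proof does not conclude.

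Two further steps are incorrect as stated. First, in your $\tau >_{sign} S(h)$ case you substitute via $u_1g_{j_1}-u_2g_{j_2}=Spol(g_{j_1},g_{j_2})$ ``after scaling by the common monomial aligning $t_{j_\ell}$ with $u_\ell$'': but equal \emph{guessed signatures} $t_{j_1}x^{\alpha_1}=t_{j_2}x^{\alpha_2}$ do not imply $t_{j_1}LM(g_{j_1})=t_{j_2}LM(g_{j_2})$, and Lemma~\ref{lem:factor_Spoly} requires alignment of leading monomials, not of signatures; in general no such common monomial $\mu$ with $t_{j_\ell}=\mu\,LM(u_\ell)$ exists. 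Second, your treatment of the failure mode $S(u_ig_{j_i}) <_{sign} LM(u_i)S(g_{j_i})$ claims via Proposition~\ref{prop:F5elimination} that the offending monomial lies in $LM(TSyz(F))$; the proposition gives only the converse direction, and in the affine setting a signature drop need \emph{not} come from a tame syzygy (degree falls are precisely the obstruction, cf.\ Definition~\ref{def:tame-syz}), so such terms cannot simply be declared absent from a minimal representation. The paper avoids both issues because the representation furnished by Lemma~\ref{lem:SGB_rewriting} guarantees $S(x^\alpha g_i)=x^\alpha S(g_i)$ for every monomial occurring, and the divisibility clause of Lemma~\ref{lem:signature-multiplication} then certifies that the pair formed at the top leading monomial is normal.
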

\begin{rem}
The converse of this result is clear.
\end{rem}

Theorem \ref{thm:F5crit} is an analogue of the Buchberger criterion for tropical $\mathfrak{S}$-Gröbner bases.
To prove it, we adapt the classical proof of the Buchberger criterion and the proof of the tropical Buchberger algorithm of Chan and Maclagan (Algorithm 2.9 of \cite{Chan:2013}).
We need two lemmata, the first one being elementary.


\begin{lem} \label{lem:factor_Spoly}
Let $x^\alpha, x^\beta, x^\gamma, x^\delta \in T$ and $P,Q \in A$
 be such that $LM(x^\alpha P)=LM(x^\beta Q)=x^\gamma$
 and $x^\delta=lcm(LM(P), LM(Q)).$
 Then \[Spol(x^\alpha P, x^\beta Q)=x^{\gamma - \delta} Spol(P,Q). \]
\end{lem}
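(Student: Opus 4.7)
The plan is to unfold both sides of the claimed equation using the definition of the $S$-polynomial given in Definition \ref{def:normal_pair}, then show they match by an elementary simplification.

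First I would establish the key compatibility: for any monomial $x^\tau$ and any polynomial $R \in A$, one has $LT(x^\tau R) = x^\tau LT(R)$. This is because multiplication by $x^\tau$ simply translates all exponent vectors by $\tau$, shifting $|x^\mu|$ by $|\tau|$ and $w \cdot \mu$ by $w \cdot \tau$ uniformly, and does not touch valuations (since $\mathrm{val}(1) = 0$). Thus by Definition \ref{defn:trop_term_order} the tropical ordering of the terms of $R$ and of $x^\tau R$ agrees, and the leading term is preserved up to the monomial shift. Applying this to $P$ and $Q$ together with the hypothesis $LM(x^\alpha P) = LM(x^\beta Q) = x^\gamma$ yields
\[ LT(x^\alpha P) = LC(P) \, x^\gamma \quad \text{and} \quad LT(x^\beta Q) = LC(Q) \, x^\gamma, \]
and in particular $x^\alpha LM(P) = x^\beta LM(Q) = x^\gamma$.

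Next I would compute $Spol(x^\alpha P, x^\beta Q)$ directly. Since $\mathrm{lcm}(LM(x^\alpha P), LM(x^\beta Q)) = x^\gamma$, the definition gives
\[ Spol(x^\alpha P, x^\beta Q) = \frac{x^\gamma}{LC(P)\, x^\gamma}\, x^\alpha P \;-\; \frac{x^\gamma}{LC(Q)\, x^\gamma}\, x^\beta Q = \frac{x^\alpha}{LC(P)}\, P \;-\; \frac{x^\beta}{LC(Q)}\, Q. \]
On the other hand, $Spol(P,Q) = \frac{x^\delta}{LC(P)\,LM(P)}P - \frac{x^\delta}{LC(Q)\,LM(Q)}Q$. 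From $x^\alpha LM(P) = x^\beta LM(Q) = x^\gamma$ we deduce $LM(P) \mid x^\gamma$ and $LM(Q) \mid x^\gamma$, hence $x^\delta = \mathrm{lcm}(LM(P), LM(Q)) \mid x^\gamma$, so $x^{\gamma - \delta}$ is a bona fide monomial. Multiplying through gives
\[ x^{\gamma-\delta}\, Spol(P,Q) = \frac{x^\gamma}{LC(P)\,LM(P)}P - \frac{x^\gamma}{LC(Q)\,LM(Q)}Q = \frac{x^\alpha}{LC(P)}\, P - \frac{x^\beta}{LC(Q)}\, Q, \]
which coincides with the expression for $Spol(x^\alpha P, x^\beta Q)$ above.

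There is no real obstacle here; the lemma is pure bookkeeping. The only point worth flagging is the compatibility of the tropical term order with multiplication by a monomial, which underlies the identification $LT(x^\tau R) = x^\tau LT(R)$ and is immediate from Definition \ref{defn:trop_term_order}.
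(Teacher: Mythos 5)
Your proof is correct: the paper states Lemma \ref{lem:factor_Spoly} without proof, dismissing it as elementary, and your argument is exactly the intended verification --- unfold the definition of $Spol$, use the fact that the tropical term order of Definition \ref{defn:trop_term_order} is compatible with multiplication by a monomial (so $LT(x^\tau R)=x^\tau LT(R)$, hence $x^\alpha LM(P)=x^\beta LM(Q)=x^\gamma$ and the leading coefficients are unchanged), and observe that $x^\delta \mid x^\gamma$ so both sides coincide term by term. Your explicit check that $x^{\gamma-\delta}$ is a genuine monomial is a nice touch that the paper leaves implicit.
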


\begin{lem} \label{lem:SGB_rewriting}
Let $G$ be an $\mathfrak{S}$-Gröbner basis of $I$ up to some signature $\sigma.$ Let $h \in I,$ be such that $S(h) \leq \sigma.$
Then there exist $r \in \mathbb{N},$ $g_1,\dots, g_r \in G,$
$Q_1,\dots, Q_r \in A$ such that for all $i$ and $x^\alpha$ a monomial of $Q_i,$
$S(x^\alpha g_i)=x^\alpha S(g_i) \leq S(h)$ and 
$LT(Q_i g_i) \leq LT(h),$
the $x^\alpha S(g_i)$'s are all distinct and non-zero, 
and, finally, we have
\[h=\sum_{i=1}^r Q_i g_i.\]
\end{lem}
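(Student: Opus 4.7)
The plan is to proceed by Noetherian induction on the pair $(S(h), LT(h))$ ordered lexicographically. The signature component is well-ordered under $\leq_{sign}$ since $\leq_{m}$ is a well-order on $A$; the leading-term component is controlled because, at each reduction step constructed below, $LT$ strictly decreases while $S$ does not increase, so no infinite descending chain of reductions starting from a fixed $h$ can occur.

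For $h = 0$, take $r = 0$. For $h \neq 0$ with $S(h) \leq \sigma$, Proposition~\ref{prop:reduction_by_S_basis} provides $g \in G$ and $t \in T$ with $LM(tg) = LM(h)$ and $S(tg) = tS(g) \leq S(h)$. Set $c := LC(h)/LC(tg) \in k^{\ast}$ and $h' := h - c\, tg$. The leading terms cancel, giving $LT(h') < LT(h)$, and the signature of a difference is bounded by the maximum, yielding $S(h') \leq S(h)$. Thus $(S(h'), LT(h')) <_{lex} (S(h), LT(h))$ and $S(h') \leq \sigma$, so the induction hypothesis yields a representation $h' = \sum_{j=1}^{r'} Q'_{j} g'_{j}$ meeting all conclusions. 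Taking $(Q_{1}, g_{1}) := (ct, g)$ and $(Q_{i+1}, g_{i+1}) := (Q'_{i}, g'_{i})$ gives $h = \sum_{i} Q_{i} g_{i}$. The required bounds are immediate: the adjoined term has $S(ctg) = tS(g) \leq S(h)$ and $LT(ctg) = LT(h)$, while for the inherited terms $x^{\alpha} S(g'_{j}) \leq S(h')$ and $LT(Q'_{j} g'_{j}) \leq LT(h') < LT(h)$ by the induction hypothesis. Non-vanishing of the $x^{\alpha} S(g_{i})$ is automatic since each $g_{i} \in G$ is nonzero.

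The main obstacle is enforcing the distinctness of the $x^{\alpha} S(g_{i})$ in the combined sum. Adjoining $ct\,g$ may create a signature $tS(g)$ coinciding with one already present in the expansion of $h'$. A collision with the same reductor (a monomial $t'$ of some $Q'_{j}$ with $g'_{j} = g$ and $t' S(g) = tS(g)$) forces $t' = t$, so one merely adds coefficients, dropping the term if cancellation occurs. A collision between two distinct pairs $(g_{i}, x^{\alpha})$ and $(g_{j}, x^{\beta})$ is the delicate case: by Lemma~\ref{lem:signature-multiplication}, the polynomials $x^{\alpha} g_{i}$ and $x^{\beta} g_{j}$ have exact signature $x^{\alpha} S(g_{i}) = x^{\beta} S(g_{j})$, so either Lemma~\ref{lem:baisse_signature} (when their leading monomials differ) or a direct leading-term cancellation (when they agree) produces an element of $I$ of strictly smaller signature, which re-expands via the induction hypothesis. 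Since each such collision-resolution step strictly decreases the largest signature in the representation, the overall procedure terminates and yields the desired standard expression.
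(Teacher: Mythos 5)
Your induction is not well-founded, and the failure is precisely the tropical subtlety that this lemma is meant to circumvent. You induct lexicographically on $(S(h),LT(h))$ and justify termination by ``$LT$ strictly decreases while $S$ does not increase''. In the classical setting this works because a monomial order is a well-order; but the tropical term order of Definition~\ref{defn:trop_term_order} compares coefficients through their valuations, and on \emph{terms} it admits infinite strictly decreasing chains, e.g.\ $x > px > p^{2}x > \cdots$. So a strictly decreasing sequence of leading terms does not bound the number of top-reduction steps. Concretely, take $k=\Q$ with the $p$-adic valuation, $w=(0,0)$, $g_{1}=f_{1}=x+py$, $g_{2}=f_{2}=y+px$, $\sigma=e_{2}$ and $h=x$. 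Then $G=\{g_{1},g_{2}\}$ is a tropical $\mathfrak{S}$-GB up to $\sigma$, $S(h)=e_{2}$, and your procedure (each step being exactly the reduction supplied by Proposition~\ref{prop:reduction_by_S_basis}) produces $x \to -py \to p^{2}x \to -p^{3}y \to \cdots$: at every step the leading term strictly decreases (the valuation of its coefficient grows), the signature stays $e_{2}$, and the process never stops. The finite representation asserted by the lemma does exist, namely $x=\tfrac{1}{1-p^{2}}\,g_{1}-\tfrac{p}{1-p^{2}}\,g_{2}$, but it cannot be reached by iterated top-reduction: that would amount to summing the series $1+p^{2}+p^{4}+\cdots$. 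This is the same phenomenon that forces Chan and Maclagan to use a Mora-type division algorithm, and it also undermines your collision-resolution loop, whose termination rests on the same kind of descent.

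The paper's proof avoids this entirely by linear algebra: form the (finitely many relevant rows and columns) Macaulay matrix whose rows are multiples $c\tau g$, $g\in G$, with $S(\tau g)=\tau S(g)\leq S(h)$, one row per non-zero signature and one row per leading monomial of $LM(I_{\leq\sigma})$, stack $h$ at the bottom, and run the tropical LUP reduction (Algorithm~\ref{algo:trop LUP}). Since a pivot column, once eliminated, is never revisited, this is a finite Gaussian elimination — the divisions by pivot coefficients are what produce units such as $1/(1-p^{2})$ above — and the $Q_{i}$ are read off the elimination coefficients, the signature bound, the bound $LT(Q_{i}g_{i})\leq LT(h)$ and the distinctness of the $x^{\alpha}S(g_{i})$ all coming from the ``one row per signature, one row per leading monomial'' normalization. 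If you want to keep a rewriting-style argument, you must replace the induction on $LT$ by a genuine finiteness statement (in the spirit of Lemma~2.10 of Chan--Maclagan, which the paper invokes later for exactly this reason) rather than by well-foundedness of the tropical term order, which does not hold.
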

\begin{proof}
  It is clear by linear algebra.
  One can form a Macaulay matrix whose rows correspond to polynomials $c \tau g$ with $\tau \in T, c \in k^*, g \in G$ such that $S(\tau g)=\tau S(g) \leq S(h)$.
  Only one row is possible per non-zero signature, and each monomial in $LM(I_{ \leq \sigma})$ is reached as leading term by only one row.
  It is then enough to stack $h$ at the bottom of this matrix and perform a tropical LUP form computation (see Algorithm \ref{algo:trop LUP}) to read the $Q_i$'s on the reduction of $h$.
\end{proof}
\begin{proof}[PROOF of Theorem \ref{thm:F5crit}]
We prove the main result by induction on the signature. 
We follow the order $\leq_{sign}$ for the induction.
It is clear for $\sigma =e_1$
and also for the fact we can pass from an $\mathfrak{S}$-GB
up to $<_{sign}e_i$ to $\leq_{sign} e_i.$
We write the elements of $G$ as $g_1,\dots,g_r$
for some $r \in \mathbb{Z}_{>0}.$

Let us assume that $G$ is an $\mathfrak{S}$-GB up to signature $<_{sign}\sigma$
for some signature $\sigma$ and let us prove it is a $\mathfrak{S}$-GB up to $\leq_{sign} \sigma.$
We can assume that all $g \in G$ satisfy $LC(g)=1.$
Let $P \in I$ be irreducible, with $LC(P)=1$ and such that $S(P)=\sigma.$
We prove that there is $\tau \in  T, g\in G$ such that
$LM(P)=LM(\tau g)$ and $S(\tau g)=\tau S(g)=\sigma.$

Our first assumption for $G$ implies that there exist at least one $g \in G$ and some $\tau \in T$ such that 
$\tau S(g)=S(P)=\sigma.$

If $LM(\tau g)=_{\leq} LM(P)$ we are done.
Otherwise, by Lemma \ref{lem:baisse_signature}, there exist some $a, b \in k^*$ such that $S(a P+b \tau g)=\sigma'$ for some $\sigma'<_{sign}\sigma$.

We can apply Lemma \ref{lem:SGB_rewriting} to $aP+ b \tau g$ and obtain that there exist $h_1,\dots,h_r \in A,$ such that $P=\sum_{i=1}^r h_i g_i,$ and for all $i,$ and $x^\gamma$ monomial of $h_i,$ the $x^\gamma S(g_i)=S(x^\gamma g_i) \leq_{sign} \sigma$ are all distincts.
We remark that $LT(P) \leq \max_i (LT(g_i h_i))$.
We denote by $m_i:=LT(g_i h_i).$

Among all such possible ways of writing $P$ as $\sum_{i=1}^r h_i g_i,$ we define $\beta$ as the \textbf{minimum} of the $\max_i (LT(g_i h_i))$'s.
Such a $\beta$ exists thanks to Lemma 2.10 in \cite{CM:2013} (adaptation to the non-homogeneous setting is for free).
We write $x^u=LM(\beta).$

If $LT(P)=_{\leq}\beta,$ then we are done. Indeed, there is then some $i$ and $\tau$
in the terms of $h_i$ such that $LM(\tau g_i)=LM(P)$
and $S(\tau g_i)=\tau S(g_i) \leq_{sign} \sigma.$

We now show that $LT(P)<\beta$ leads to a contradiction.

We can renumber the $g_i$'s so that:
\begin{itemize}
\item $\beta=_{\leq}m_1=_{\leq}\dots=_{\leq}m_{d}$.
\item $\beta > m_i$ for $i>d.$
\end{itemize}

We can assume that among the set of possible $(h_1,\dots,h_r)$ that reaches $\beta,$
we take one such that this $d$ is minimal.

Since $LT(P)<\beta,$ then we have $d \geq 2.$

We can write 
\begin{multline*}
  Spol(g_1,g_2)=LC(g_2) \frac{lcm(LM(g_1),LM(g_2))}{LM(g_1)}g_1\\
  -LC(g_1) \frac{lcm(LM(g_1),LM(g_2)}{LM(g_2)}g_2.
\end{multline*}


By construction, $LM(h_{1})S(g_{1}) \neq LM(h_{2})S(g_{2})$, so $(LM(h_{1})g_{1},LM(h_{2}g_{2})$ is a normal pair.
By Lemma~\ref{lem:factor_Spoly}, there exists a term $\mu$ such that $\mu\frac{lcm(LM(g_1),LM(g_2))}{LM(g_i)} = LM(h_{i})$ for $i \in \{1,2\}$.
So by Lemma~\ref{lem:signature-multiplication}, $(g_{1},g_{2})$ is a normal pair as well.

If $S(Spol(g_1,g_2))=\sigma,$
by the second property of the F5 criterion, we are done.

Otherwise, $S(Spol(g_1,g_2))<_{sign}\sigma.$
Moreover, 
let \[L= \frac{LC(h_1 g_1) }{LC(g_1)LC(g_2)}\frac{x^u}{lcm(LM(g_1),LM(g_2))}.\]
Then we have $S(L \cdot Spol(g_1,g_2)) \leq_{sign} \sigma$ thanks to Lemma~\ref{lem:factor_Spoly}.
Using the same construction as before with the first assumption of the F5 criterion and Lemmata~\ref{lem:baisse_signature} and~\ref{lem:SGB_rewriting}, 
we obtain some $h_i'$'s such that
 $L \cdot Spol(g_1,g_2)=\sum_{i=1}^r h_i' g_i,$
 $LT(h_i' g_i) \leq LT( L \cdot Spol(g_1,g_2))<\beta$ for all $i.$
 Furthermore, the signatures $S(x^{\alpha}g_{i}) = x^{\alpha}S(g_{i})$ for $i \in \{1,\dots,r\}$ and $x^{\alpha}$ in the support of $h'_{i}$ are all distincts. 
 
We then get:
\begin{flalign*}
 P=&\sum_{i=1}^r h_i g_i, \\
 =&  \sum_{i=1}^r h_i g_i- L \cdot Spol(g_1,g_2)+\sum_{i=1}^r h_i'g_i, \\
 =& \left(h_1-\frac{LC(h_1g_1)}{LC(g_1)}\frac{x^{u}}{LM(g_1)}+h_1' \right) g_1\\
&  +\left(h_2-\frac{LC(h_1g_1)}{LC(g_2)}\frac{x^{u}}{LM(g_{2})}+h_2' \right) g_2  +\sum_{i=3}^r \left( h_i + h_i' \right) g_i, \\
  =: & \sum_{i=1}^r \widetilde{h}_i g_i, \\
\end{flalign*}
where the $\widetilde{h}_{i}$'s are defined naturally.

By construction, $LT(\widetilde{h}_1 g_{1})< LT(h_1 g_{1}) = \beta$
and $LT(\widetilde{h}_i) \leq \beta$ for $i \leq d$
and $LT(\widetilde{h}_i) < \beta$ for $i > d.$

As a consequence, we have obtained a new expression for $f$
with either $max_i(LT(\widetilde{h}_i)) <\beta$ 
or this term attained stricly less than $d$ times, which is in either case
a contradiction with their definitions as minima.
So $LT(P)=_{\leq}\beta,$ which concludes the proof of the main result. 
It is then direct to adapt the previous proof to the case of an $\mathfrak{S}_d$-GB.\end{proof}

This theorem holds also for $\mathfrak{S}$-GB (or $\mathfrak{S}_d$-GB) up to a given signature.
We have the following variant as a corollary
for compatibility with sugar-degree:

\begin{prop}
Suppose that $d \in \mathbb{Z}_{>0},$ and
$G$ is a finite set of polynomials of $I$ such that:
\begin{enumerate}
\item Any $g \in G$ is $\mathfrak{S}_d$-irreducible in $I^{\leq d}.$ 
\item For all $ g_1, g_2 \in G$ we have $g_1, g_2$ and $Spol(g_1,g_2)$ in $I^{\leq d}.$ 
\item For all $i \in \llbracket 1, s \rrbracket,$ there exists $g \in G$ such that $S_d(g)=e_i.$
\item for any $g_1,g_2 \in G$ such that $(g_1,g_2)$ is a $d$-normal pair, there exists
$g \in G$ and $t \in T$ such that $tg$ is $\mathfrak{S}_d$-irreducible and 
$tS_d(g)=S_d(tg)=S_d(Spol(g_1,g_2)).$
\end{enumerate}
Then $G$ is an $\mathfrak{S}$-Gröbner basis of $I.$ \label{prop:F5crit_sugar_bounded}
\end{prop}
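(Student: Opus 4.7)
The plan is to transpose the proof of Theorem~\ref{thm:F5crit} to the sugar-degree filtration, since the four conditions here are the direct $\mathfrak{S}_d$-analogues of the two hypotheses of that theorem, with conditions (1) and (2) serving to lock all the relevant data inside the finite-dimensional subspace $I^{\leq d}$. In effect, we are invoking the variant already announced at the end of Theorem~\ref{thm:F5crit} (``the analogue result using $d$-normal pairs to recognize an $\mathfrak{S}_d$-GB in $I^{\leq d}$''), and then transferring the conclusion to the full ideal $I$.

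First, I would run the same induction on signatures as in the proof of Theorem~\ref{thm:F5crit}, but with $S$, $\Sigma$, and $\leq_{sign}$ replaced throughout by $S_d$ and the restriction of $\leq_{sign}$ to signatures of sugar-degree $\leq d$. The base case $\sigma = e_1$ and the passage from $<_{sign} e_i$ to $\leq_{sign} e_i$ are unchanged; condition (3) plays the role of hypothesis (1) of Theorem~\ref{thm:F5crit}, and condition (4) plays the role of hypothesis (2) in this $S_d$-setting. The substitutes for Lemma~\ref{lem:baisse_signature} and Lemma~\ref{lem:SGB_rewriting} hold verbatim inside $I^{\leq d}$: the Macaulay matrix constructed in the proof of Lemma~\ref{lem:SGB_rewriting} can be truncated to rows of sugar-degree $\leq d$ without losing any signature $\leq \sigma$, because every signature below $\sigma$ still has sugar-degree $\leq d$.

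The crucial role of condition (2) is to ensure that every auxiliary polynomial arising in the contradiction step of Theorem~\ref{thm:F5crit}, in particular $Spol(g_1,g_2)$ and its scalar multiple $L \cdot Spol(g_1,g_2)$, stays in $I^{\leq d}$. The inductive machinery therefore never exits the sugar-degree bound, and the contradiction closes exactly as in Theorem~\ref{thm:F5crit}, yielding that $G$ is an $\mathfrak{S}_d$-Gröbner basis of $I^{\leq d}$. Combining this with Lemma~\ref{lem:signature-multiplication} and Definition~\ref{def:S-GB} then delivers the $\mathfrak{S}$-Gröbner basis property: an $\mathfrak{S}_d$-irreducible $g \in G$ remains $\mathfrak{S}$-irreducible because any candidate reducer of strictly smaller signature automatically has sugar-degree $\leq d$, and the match $tS_d(g) = S_d(tg)$ lifts to $tS(g) = S(tg)$ by Lemma~\ref{lem:signature-multiplication}.

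The main obstacle I expect is precisely this bookkeeping between $S_d$ and $S$: one must verify carefully that within the filtered piece $I^{\leq d}$ the two notions of signature, of irreducibility and of normal pair coincide, so that the $\mathfrak{S}_d$-GB property genuinely upgrades to the $\mathfrak{S}$-GB property claimed in the conclusion. Once this identification is made, all inequality-handling reduces to what was already performed in the proof of Theorem~\ref{thm:F5crit}, and no further case analysis is needed.
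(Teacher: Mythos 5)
Your first step---re-running the induction of Theorem~\ref{thm:F5crit} with $S_d$, $I^{\leq d}$ and $d$-normal pairs---is indeed the route the paper takes (the $d$-analogue is asserted at the end of Theorem~\ref{thm:F5crit}, and Proposition~\ref{prop:F5crit_sugar_bounded} is presented as a corollary of it). The genuine gap is in your transfer step. Knowing that $G$ is an $\mathfrak{S}_d$-GB of $I^{\leq d}$ only controls $\mathfrak{S}_d$-irreducible polynomials of $I^{\leq d}$, whereas the proposition claims an $\mathfrak{S}$-GB of all of $I$: you must also cover irreducible $h \in I$ whose signature has sugar-degree strictly larger than $d$. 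Equivalently, one has to verify the hypotheses of the \emph{unbounded} Theorem~\ref{thm:F5crit} for $G$, using hypothesis (2) of the proposition (every $Spol(g_1,g_2)$ with $g_1,g_2\in G$ already lies in $I^{\leq d}$) to see that all the rewriting demanded by the unbounded induction takes place inside $I^{\leq d}$, where conditions (3) and (4) apply. Your proposal never carries out this part of the argument, and it is precisely the content that distinguishes the proposition from the $d$-bounded statement.

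Moreover, the two facts you invoke to make the transfer ``automatic'' are false as stated. The order $\leq_{sign}$ of Definition~\ref{deftn:order_for_the_filtration} compares the index first, so a signature $x^\beta e_j$ with $j<i$ is smaller than $x^\alpha e_i$ regardless of its sugar-degree: a candidate $\mathfrak{S}$-reducer of strictly smaller signature may well have sugar-degree $>d$, so $\mathfrak{S}_d$-irreducibility does not automatically upgrade to $\mathfrak{S}$-irreducibility, and the same objection undermines your truncation of the Macaulay matrix in Lemma~\ref{lem:SGB_rewriting} (``every signature below $\sigma$ still has sugar-degree $\leq d$'' fails across indices). Likewise, $S_d$ and $S$ do not coincide on $I^{\leq d}$ in general: the paper notes right after Definition~\ref{def:signature_sugar} that $S_d(f)$ may differ from $S(f)$, exactly because of degree falls, and Lemma~\ref{lem:signature-multiplication} relates $S(\tau g)$ to $\tau S(g)$, not $S_d$ to $S$. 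So the identification of $S_d$-signatures, $\mathfrak{S}_d$-irreducibility and $d$-normal pairs with their unbounded counterparts, which you defer to ``careful bookkeeping,'' is the actual missing mathematics, and the reasons you offer for it would not survive that bookkeeping.
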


\section{F5 algorithm}
\label{sec:F5_algo}

In this section, we present our F5 algorithm. 
To this intent, we need to discuss some crucial algorithmic
points: how to recognize with which pairs to proceed and
how to build the Macaulay matrices and
reduce them. Some algorithms are on the following page.

\subsection{Admissible pairs and guessed signatures}

The second condition in the Definition \ref{def:normal_pair}
 of normal pairs is not possible to check in advance in
 an F5 algorithm.
One needs an $\mathfrak{S}$-Gröbner basis up to the corresponding signature
to be able to certify it.
To circumvent this issue, we use the weaker
notion of admissible pair.

\begin{deftn}[d-Admissible pair] \label{def:adm_pairs}
Given $g_1,g_2 \in I^{\leq d},$ let $Spol(g_1,g_2)=u_1 g_1-u_2 g_2$ be the $S$-polynomial of
$g_1$ and $g_2.$ We have \[u_i = \frac{lcm(LM(g_1),LM(g_2))}{LT(g_i)}.\] We say that $(g_1,g_2)$
is a $d$-\textbf{admissible pair} if:
\begin{enumerate}
\item $LM(u_i)S_d(g_i)=x^\alpha_i e_{j_i} \notin LM(TSyz).$
\item $LM(u_1)S_d( g_1) \neq LM(u_2)S_d( g_2).$
\end{enumerate}
\end{deftn}

To certify that a set is an $\mathfrak{S}_d$-GB, handling $d$-admissible pairs instead of $d$-normal pairs is harmless. Indeed, $d$-normal pairs
in $I^{\leq d}$ are contained inside the $d$-admissible pairs.
Whether a pair is $d$-admissible can be checked easily
before proceeding to reduction.


During the execution of the algorithm, when a polynomial
$x^\alpha g$ is processed, it is at first
not possible to know what is its signature.
Algorithm \ref{F5_algo_complete} has computed
$S_d(g)$ beforehand.
Thanks to the F5 elimination criterion (Prop \ref{prop:F5elimination}), we can
detect some of the $x^\alpha g$ such that $S(x^\alpha g) \neq x^\alpha S(g)$ and eliminate them.
For the processed polynomials, we use $x^\alpha S_d(g)$ as a \textbf{guessed signature}
in the algorithm. 
Once an $\mathfrak{S}$-GB up to signature $<x^\alpha S_d(g)$
is computed, we have the following alternative.
First case: $S_d(x^\alpha g)<x^\alpha S_d(g)$ and
$x^\alpha g$ reduces to zero (by the computed
 $\mathfrak{S}_d$-GB up to $d$-signature $<x^\alpha S_d(g)$). The guessed signature was wrong
but it is harmless as the polynomial is useless anyway.
Second case: $S_d(x^\alpha g)=x^\alpha S_d(g),$
and then the guessed signature is certified.
Once the criterion of Proposition \ref{prop:F5crit_sugar_bounded} is satisfied, all
signatures are certified.

What happens when we can obtain $f$ with signature
$S_d(f)=x^\alpha e_i$ in degree $d$, and $S_{d+1}(f)=x^\beta e_j <_{sign} x^\alpha e_i$ in degree $d+1$?
Thanks to the way Algorithm \ref{algo:SymbPreprocRewritten}
 handles polynomials, always looking for smallest
 signature available, $f$ and its multiples will then be 
 built using only the second way. The first way of writing
 will at most appear so as to be reduced by the second one.

\subsection{Symbolic Preprocessing and Rewritten criterion}

One of the main parts of the F5 algorithm \ref{F5_algo_complete}
is the Symbolic Preprocessing : Algorithm \ref{algo:SymbPreprocRewritten}.
From the current set of S-pairs,
sugar-degree $d$, and the current $\mathfrak{S}_{d-1}$-GB, it produces a Macaulay
matrix. One can read on the tropical reduction
of this matrix new polynomials to append
to the current basis to obtain an $\mathfrak{S}_d$-GB.
It mostly consists of detecting which pairs are admissible
and selecting a (complete) set of reductors.

A special part of the algorithm is the use 
of Rewritten techniques (due to Faugère (see \cite{F5})).

The idea is the following.
Once a polynomial has passed the F5 elimination criterion
and is set to appear in a Macaulay matrix,
it can be replaced by any
other multiple of an element of $G$ of the same $d$-signature.
Indeed, assuming correctness of the algorithm
without any rewriting technique, if one of them, $h,$ is of $d$-signature $x^\alpha e_i,$ the algorithm
computes a tropical $\mathfrak{S}$-Gröbner basis
 up to $d$-signature
$<_{sign} x^\alpha e_i.$ Hence, $h$ can be replaced by any other polynomial of same signature:
it would be reduced to the same polynomial.
By induction, one can prove that all of them can be replaced at the same time.
We also remark that this is still valid for replacing a row of a given
guessed $d$-signature by another of the same guessed $d$-signature.

One efficient way is to 
replace a polynomial $t \times g$ by the polynomial $x^\beta h$ ($h \in G)$ of same (guessed) $d$-signature $tS_d(g)$ such that $x^\beta$ has smallest degree.\footnote{Indeed,
such an $h$ can be considered as one of the most reduced possible.} Taking the sparsest available is another possibility. 
It actually leads to a substantial reduction of the running time of the F5 algorithm.

\IncMargin{1em}
\begin{algorithm}

\SetKwInOut{Input}{input}\SetKwInOut{Output}{output}

\Input{$P$, a set of $d-1$-admissible pairs of sugar-degree d and $G$ such that $G \cap I^{\leq d-1}$ is an $\mathfrak{S}_{d-1}$-GB}
\Output{A Macaulay matrix of degree d}
\For{$Q$ polynomial in $P$}
{
Replace $Q$ in $P$ by the polynomial $(uS(g),u \times g)$
with $g$ latest added to $G$ reaching the same guessed signature \;}
$C \leftarrow$ the set of the \textbf{monomials} of the polynomials in $P$ \;
$U \leftarrow $ the polynomials of $P$ with their signature, except only one polynomial is taken by guessed signature \;
$D \leftarrow \emptyset $ \;

\While{$C \neq D$}{
$m \leftarrow \max (C \setminus D)$ \;
$D \leftarrow D \cup \{ m \}$ \;
$V \leftarrow \emptyset$ \;
\For{$g \in G$}{
\If{$LM(g) \mid m$}{
$V \leftarrow V \cup \{(g, \frac{m}{LM(g)}) \}$ \;}
}
$(g, \delta) \leftarrow$ the element of $V$ with $\delta \times g$ of smallest guessed signature not already in the signatures of U, with tie-breaking by taking minimal $\delta$ (for degree then for $\leq_{sign}$) \;  
$U \leftarrow U \cup \{ \delta \times g \}$ \;
$C \leftarrow C \cup \{ \text{monomials of } \delta \times g \}$ \;
}
$M \leftarrow$ the polynomials of $U,$ written in a Macaulay matrix 
and ordered by increasing guessed signature \;			
\textbf{Return} $M$ \; 
 \caption{Symbolic-Preprocessing-Rewritten} \label{algo:SymbPreprocRewritten}
\end{algorithm}
\DecMargin{1em}


\subsection{Linear algebra}

To reduce the Macaulay matrices while respecting the signatures,
we use the following tropical LUP algorithm from \cite{Vaccon:2015}: Algorithm~\ref{algo:trop LUP}.
If the rows correspond to polynomials ordered by increasing signature,
it computes a row-reduction, respecting the signatures
with each non-zero row with a different leading
monomial.

\IncMargin{1em}
\begin{algorithm} 
\DontPrintSemicolon

 \SetKwInOut{Input}{input}\SetKwInOut{Output}{output}

 \Input{$M$, a Macaulay matrix of degree $d$ in $A$, with $n_{row}$ rows and $n_{col}$ columns, and $mon$ a list of monomials indexing the columns of $M.$}
 \Output{$\widetilde{M}$, the $U$ of the tropical LUP-form of $M$}

$\widetilde{M} \leftarrow M$ ;  \;
\eIf{$n_{col}=1$ or $n_{row}=0$ or $M$ has no non-zero entry}{
				Return $\widetilde{M}$  ;\;
	}{			
\For{$i=1$ to $n_{row}$}{
\textbf{Find} $j$ such that $\widetilde{M}_{i,j}$ has the greatest term $\widetilde{M}_{i,j} x^{mon_j}$ for $\leq$ of the row $i$; \;
\textbf{Swap} the columns $1$ and $j$ of $\widetilde{M}$, and the $1$ and $j$ entries of $mon$; \;
By \textbf{pivoting} with the first row, eliminates the coefficients of the other rows on the first column; \;
\textbf{Proceed recursively} on the submatrix $\widetilde{M}_{i \geq 2, j \geq 2}$; \;}
\textbf{Return} $\widetilde{M}$; \;}

 \caption{The tropical LUP algorithm} \label{algo:trop LUP}
\end{algorithm}
\DecMargin{1em}

\subsection{A Complete Algorithm}

We now provide with Algorithm \ref{F5_algo_complete}
a complete version of an F5 algorithm
wich uses
Buchberger-F5 criterion and all the techniques
introduced in this section.

\IncMargin{1em}
\begin{algorithm} 
\DontPrintSemicolon

 \SetKwInOut{Input}{input}\SetKwInOut{Output}{output}

 \Input{$f_1,\dots,f_s$ polynomials, ordered by degree}
 \Output{A tropical $\mathfrak{S}$-GB $G$ of $\left\langle f_1,\dots,f_s \right\rangle$}

$G \leftarrow \{ (e_i,f_i) \text{ for i in } \llbracket 1,s\rrbracket \}$ ;  \;
$B \leftarrow \{ \text{S-pairs of } G \}$ ; $d \leftarrow 1$ ; \;
\While{$B \neq \emptyset$}{
\If{there is $i$ s.t. $\vert f_i \vert = d$}
{Replace the occurence of $f_i$ in $G$ by its reduction modulo $G \cap \left\langle f_1,\dots,f_{i-1} \right\rangle;$}
$P$ \textbf{receives} the pop of the $d-1$-admissible pairs in $B$ of sugar-degree $d$. Suppress from $B$ the others of sugar-degree $d$; \;
\textbf{Write} them in a Macaulay matrix $M_d$, along with their $\mathfrak{S}_d$-reductors obtained from $G$ (one per signature) by \textbf{Symbolic-Preprocessing-Rewritten}$(P,G)$ (Algorithm \ref{algo:SymbPreprocRewritten}); \;
\textbf{Apply} Algorithm \ref{algo:trop LUP} to compute the $U$ in the tropical LUP form of $M$ (no choice of pivot) ; \;
\textbf{Add} to $G$ all the polynomials obtained from $\widetilde{M}$ that provide new leading monomial up to their $d$-signature ; \;
\textbf{Add} to $B$ the corresponding new $d$-admissible pairs ; \;
$d \leftarrow d+1$ ; \;
}			
\textbf{Return} $G$ ; \; 

 \caption{A complete F5 algorithm} \label{F5_algo_complete}
\end{algorithm}
\DecMargin{1em}


\begin{theo}
Algorithm \ref{F5_algo_complete} computes an $\mathfrak{S}$-GB of $I.$ It avoids trivial tame syzigies.
\end{theo}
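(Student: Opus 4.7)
The plan is to argue correctness by induction on the sugar-degree $d$, invoking the Buchberger-F5 criterion in its sugar-degree version (Proposition~\ref{prop:F5crit_sugar_bounded}). Concretely, I will show that at the end of the $d$-th iteration of the main while-loop, the restriction $G \cap I^{\leq d}$ is an $\mathfrak{S}_d$-GB of $I^{\leq d}$. Since the sugar-degrees of normal pairs stabilize (by Proposition~\ref{prop:finiteness_S_basis} and a Dickson argument on signatures), this eventually yields an $\mathfrak{S}$-GB of $I$, from which the corollary that $G$ is a tropical GB follows.

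For the base case $d$ equal to $\min_i |f_i|$, the initialization inserts each $(e_i,f_i)$ with $|f_i|=d$ directly into $G$ after reducing it modulo the $f_j$ with $j<i$; by construction each such element is $\mathfrak{S}_d$-irreducible and realizes the signature $e_i$, satisfying condition (3) of Proposition~\ref{prop:F5crit_sugar_bounded}. For the inductive step, I assume $G \cap I^{\leq d-1}$ is an $\mathfrak{S}_{d-1}$-GB, and I check the four conditions of Proposition~\ref{prop:F5crit_sugar_bounded} at degree $d$. Condition (2) is immediate from the sugar-degree bookkeeping in \textbf{Symbolic-Preprocessing-Rewritten} (Algorithm~\ref{algo:SymbPreprocRewritten}), which only enlists polynomials whose sugar-degree is $d$. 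Condition (1), irreducibility of the new elements, follows because the tropical LUP algorithm (Algorithm~\ref{algo:trop LUP}) processes rows by \emph{increasing} signature, so each newly produced row with a new leading monomial has been reduced by every row of strictly smaller signature.

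The heart of the argument is condition (4): every $d$-normal pair of sugar-degree $d$ has its $S$-polynomial reduced by some $t\cdot g$ in $G$ of matching signature, with $tg$ being $\mathfrak{S}_d$-irreducible. I handle this in two parts. First, since $d$-normal pairs are a subset of $d$-admissible pairs (Def.~\ref{def:adm_pairs}), every normal pair of sugar-degree $d$ is among the $P$ extracted from $B$; this follows from the invariant, maintained inductively, that every admissible pair of sugar-degree $\leq d$ is either in $B$ or has already been treated. Second, the rewritten substitution in Algorithm~\ref{algo:SymbPreprocRewritten} is legitimate: if two polynomials share the same guessed signature $x^\alpha S_d(g)$, then by Lemma~\ref{lem:signature-multiplication}, either both have true $d$-signature equal to $x^\alpha S_d(g)$, or both have strictly smaller $d$-signature --- in which case, by the inductive hypothesis and Proposition~\ref{prop:reduction_by_S_basis}, they reduce to zero modulo the already-known $\mathfrak{S}_d$-GB up to that strictly smaller signature. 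Thus, at the level of the tropical LUP output, replacing one representative by another of the same guessed signature does not change the set of new leading monomials produced.

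For the second claim (trivial tame syzygies are not reduced to zero), I argue that each $d$-admissible pair enforces $LM(u_i)S_d(g_i) \notin LM(TSyz)$; hence any pair whose processing would amount to a trivial tame syzygy --- one whose leading monomial in the signature module is detected by the F5 elimination criterion (Proposition~\ref{prop:F5elimination}) --- is filtered out before being placed into the Macaulay matrix. The main technical obstacle is precisely the interaction between guessed and true $d$-signatures and the compatibility of the rewritten criterion with the induction; once the invariant that ``all rows of smaller true $d$-signature have been processed before a row of guessed $d$-signature $\sigma$ is reduced'' is established via the ordering used by \textbf{Symbolic-Preprocessing-Rewritten} and Algorithm~\ref{algo:trop LUP}, the four conditions of Proposition~\ref{prop:F5crit_sugar_bounded} are satisfied at the end of iteration $d$, and termination of the outer loop is ensured because beyond some $d_0$ no admissible pair of higher sugar-degree survives the F5 and rewritten filters, by Proposition~\ref{prop:finiteness_S_basis}.
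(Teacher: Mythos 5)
For the first claim your route is essentially the paper's own: induction on the sugar-degree (and, within a degree, position and signature by signature), the dichotomy that a guessed signature is either certified or its row reduces to zero against the part of the basis already computed, verification of the hypotheses of Proposition~\ref{prop:F5crit_sugar_bounded} (via the $\mathfrak{S}_d$-analogue of Theorem~\ref{thm:F5crit}), and termination from correctness together with Proposition~\ref{prop:finiteness_S_basis}. At the level of detail of the paper's sketch, that part is fine.

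The gap is in the second claim, that no trivial tame syzygy is reduced to zero. You argue that the admissibility test (Definition~\ref{def:adm_pairs}) discards any candidate whose guessed signature lies in $LM(TSyz(F))$, and you identify \emph{pairs whose processing would amount to a trivial tame syzygy} with \emph{pairs detected by Proposition~\ref{prop:F5elimination}}; but this identification is exactly what has to be proved. Note that a trivial tame syzygy has, by definition, its leading monomial in $LM(TSyz(F))$, so the real issue is not membership but \emph{detection}: the algorithm cannot test membership in $LM(TSyz(F))$ directly, it can only apply the second part of Proposition~\ref{prop:F5elimination} to leading monomials it has already produced. What is missing is the argument the paper invokes: a trivial syzygy with leading monomial $x^\alpha e_i$ has its top coordinate in $\langle f_1,\dots,f_{i-1}\rangle$, so $x^\alpha \in LM(\langle f_1,\dots,f_{i-1}\rangle)$, and the tameness degree condition guarantees that a witness $x^\beta f_j$ with $j<i$, $\vert x^\beta f_j\vert \leq \vert x^\alpha\vert$ and leading monomial $x^\alpha$ has sugar-degree at most that of the candidate, hence (by correctness of the lower part of the computation, which is available in your induction) has already been produced and its leading monomial is visible when the candidate of guessed signature $x^\alpha e_i$ is examined. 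Without this timing step, your filtering argument only shows that the candidates the algorithm happens to recognize as syzygies are discarded, not that every trivial tame syzygy is recognized, which is the actual content of the second statement of the theorem.
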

\begin{proof}
It relies on Theorem \ref{thm:F5crit} and then Proposition \ref{prop:F5crit_sugar_bounded}.
The proof is by induction on the sugar-degree,
then $i$, then the $x^\alpha e_i.$
One first proves that at the end of the main \emph{while} loop any guessed signature is correct, or its
row has reduced to zero, and then
that $\mathfrak{S}_d$-GB are computed, signature by signature.
One can then apply \ref{prop:F5crit_sugar_bounded} on the output to conclude.
Termination is a consequence of
correctness and Prop.~\ref{prop:finiteness_S_basis}.
For the syzygies, it is a consequence of
Prop.~\ref{prop:F5elimination} and the fact that
trivial syzygies of leading monomial $x^\alpha e_i$ are such that
$x^\alpha \in LM(\left\langle f_1,\dots,f_{i-1} \right\rangle).$
\end{proof}

\begin{rem}
Condition 1 of \ref{thm:F5crit} and 3 of \ref{prop:F5crit_sugar_bounded}
is not satisfied when for some $i,$ $f_i \in \left\langle f_1,\dots,f_{i-1}\right\rangle.$ This is harmless as: \begin{enumerate*}
\item As soon as it is found by computation, no signature in $e_i$ will appear anymore.
\item The Buchberger-F5 criterion can be applied omitting $f_i$.
\end{enumerate*}
\end{rem}

\section{Other algorithms}
\label{sec:Other_algo}

\subsection{Iterative F5}

In this subsection, we present briefly another way of extending 
the F5 algorithm to the affine setting: a completely iterative way in the initial polynomials.
The idea is to compute tropical Gröbner bases for 
$\left\langle f_1 \right\rangle, \left\langle f_1, f_2 \right\rangle, \dots, \left\langle f_1, \dots ,f_s \right\rangle.$

This corresponds to using the position over term ordering on the signatures, or in terms of filtration, to the following filtration on $A^s$:

\begin{deftn}
We write that $x^\alpha e_i \leq_{incr} x^\beta e_j$ if:
\begin{enumerate}
\item if $i < j.$
\item if $i = j$  and 
 $\vert x^\alpha f_i \vert < \vert x^\beta f_j \vert.$
\item if $i = j$  and 
 $\vert x^\alpha f_i \vert = \vert x^\beta f_j \vert,$
and 
\begin{itemize}
\item $x^\alpha \notin LM(I_{i-1})$
and $x^\beta \in LM(I_{i-1}),$ or
\item both $x^\alpha, x^\beta \in LM(I_{i-1})$ and $x^\alpha \leq x^\beta,$ or
\item both  $x^\alpha, x^\beta \notin LM(I_{i-1})$ and $x^\alpha \leq x^\beta.$ 
\end{itemize}
\end{enumerate} \label{deftn:order_for_the_iterative_filtration}
\end{deftn}

\begin{prop}[\cite{F5}]
If $x^\alpha \in LM(I_{ i-1}),$
then the filtration is constant
at \[I_{\leq  x^\alpha e_i}. \] \label{prop:F5elimination_iterative}
\end{prop}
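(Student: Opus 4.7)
The plan is to adapt the first half of the proof of Proposition~\ref{prop:F5elimination} to the iterative ordering $\leq_{incr}$. Since $x^\alpha \in LM(I_{i-1})$, pick $p = \sum_{j<i} a_j f_j$ with $LM(p) = x^\alpha$, i.e., $p = x^\alpha + g$ with $LM(g) < x^\alpha$ in the tropical term order. Multiplying by $f_i$ and rearranging yields
\[
x^\alpha f_i \;=\; -g\,f_i \;+\; \sum_{j<i} (a_j f_i)\, f_j,
\]
and the goal becomes to show that every term on the right lies in $I_{<_{incr}\, x^\alpha e_i}$, for then the filtration must be constant at $x^\alpha e_i$.

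For each summand $(a_j f_i)\, f_j$ with $j<i$, expanding as a combination of polynomials of the form $x^\beta f_j$ shows that every signature appearing is of the form $x^\beta e_j$ with $j<i$, and item~(1) of Definition~\ref{deftn:order_for_the_iterative_filtration} immediately yields $x^\beta e_j <_{incr} x^\alpha e_i$. For $-g\,f_i$, I would expand $g$ monomial by monomial and show that each monomial $x^\gamma$ of $g$ satisfies $x^\gamma e_i <_{incr} x^\alpha e_i$. The strict tropical inequality $x^\gamma < x^\alpha$ forces either $|x^\gamma f_i| < |x^\alpha f_i|$, which is handled by item~(2), or $|x^\gamma f_i| = |x^\alpha f_i|$, in which case item~(3) is invoked.

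The only delicate step, and where the hypothesis is actually used, is the case analysis inside item~(3). If $x^\gamma \notin LM(I_{i-1})$, the first sub-case (combined with $x^\alpha \in LM(I_{i-1})$) directly gives $x^\gamma e_i <_{incr} x^\alpha e_i$; if $x^\gamma \in LM(I_{i-1})$, the second sub-case applies and $x^\gamma < x^\alpha$ yields $x^\gamma e_i <_{incr} x^\alpha e_i$; the third sub-case is ruled out because $x^\alpha \in LM(I_{i-1})$. Collecting these facts gives $x^\alpha f_i \in I_{<_{incr}\, x^\alpha e_i}$, which proves $I_{<_{incr}\, x^\alpha e_i} = I_{\leq_{incr}\, x^\alpha e_i}$, as desired. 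Note that, by contrast with Proposition~\ref{prop:F5elimination}, no sugar-degree bound nor any preliminary pivoting step on $LM(TSyz(F))$ enters: the coarser position-over-term filtration absorbs the rewriting via any element of $\langle f_1,\dots,f_{i-1}\rangle$ without further bookkeeping.
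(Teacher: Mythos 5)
Your overall decomposition is the paper's one: write $x^\alpha+g=\sum_{j<i}a_jf_j$, multiply by $f_i$, push the terms $(a_jf_i)f_j$ below $x^\alpha e_i$ via item (1) of Definition~\ref{deftn:order_for_the_iterative_filtration}, and handle $-g f_i$ monomial by monomial. The gap is in your treatment of the tail $g$, precisely at the point where you declare the paper's extra normalization unnecessary. In the second sub-case of item (3) you need the \emph{monomial} inequality $x^\gamma \leq x^\alpha$ (the tie-break in Definition~\ref{deftn:order_for_the_iterative_filtration} compares bare monomials), but what you actually know from $LM(p)=x^\alpha$ is only the \emph{term} inequality $c_\gamma x^\gamma < x^\alpha$ for the terms of $g$, and in a tropical order this involves $val(c_\gamma)$: a term can be small because its coefficient has large valuation while its monomial is large. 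Concretely, over $\mathbb{Q}$ with the $2$-adic valuation, $w=[0,0]$ and $\leq_1$ chosen with $x<_1 y$, the polynomial $p=x+2y$ has $LM(p)=x$, yet the tail monomial $y$ satisfies $y> x$ as monomials; if moreover $y\in LM(I_{i-1})$ (e.g.\ $2x+y\in I_{i-1}$), your sub-case (2) gives $y\,e_i >_{incr} x\,e_i$, and the summand $2y f_i$ of your decomposition is not placed in $I_{<_{incr} x e_i}$, so the argument does not close.

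This is exactly why the paper's proof does not take an arbitrary representative: it chooses $g$ with \emph{no} monomial in $LM(I_{i-1})$ (complete reduction modulo a tropical Gröbner basis of $I_{i-1}$, or equivalently a pivoting/linear-algebra argument), after which only a degree drop (item (2)) or the first sub-case of item (3) ($x^\gamma\notin LM(I_{i-1})$, $x^\alpha\in LM(I_{i-1})$) ever occurs, and neither requires any monomial tie-break comparison. So your final remark that ``the coarser position-over-term filtration absorbs the rewriting via any element of $\langle f_1,\dots,f_{i-1}\rangle$ without further bookkeeping'' is the flawed step: the position-over-term part only disposes of the $(a_jf_i)f_j$ summands, not of the tail in position $i$. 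The fix is short — insert the reduction of the tail modulo $LM(I_{i-1})$ before multiplying by $f_i$ — and with it your argument coincides with the paper's.
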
 
\begin{proof}
We can write $x^\alpha + g = \sum_{j<i} a_j f_j,$
with for all $j$ $a_j \in I$, and $g \in I$
with no monomial in $LM(I_{i-1}).$
Then: $x^\alpha f_i = (-g)f_i + \sum_{j=1}^{i-1} (a_j f_i) f_j,$
and the filtration is constant at $I_{\leq  x^\alpha e_i}.$
\end{proof}

It is then possible to state a Buchberger-F5
criterion and provide an adapted F5 algorithm.
The two algorithms will then differ in the following way.
\textbf{1.} For a given $x^\alpha$ and $e_i,$
the vector space
$I_{<x^\alpha e_i}$ is much bigger in the iterative setting,
often of infinite dimension.
Thus, polynomials of signature $x^\alpha e_i$
can be more deeply reduced.
\textbf{2.} More syzygies can be avoided in the iterative setting.
\textbf{3.} However, many more matrices are to be produced: one for each $i$ and each necessary degree.
Construction of the matrices is not mutualised by degree anymore. 

\subsection{F4}

Another way to compute tropical Gröbner
bases for affine polynomials is to adapt 
Faugère's F4 algorithm \cite{F99}

Roughly, the F4 algorithm is an adaptation
of Buchberger's algorithm such that:
all S-polynomials of a given degree
are processed and reduced together in a big Macaulay
matrix, along with their reducers.
The algorithm carries on the computation
until there is no S-polynomials to reduce.

In a tropical setting, we have adapted the 
so-called "normal strategy" of F4 using 
the tropical LUP algorithm to reduce the 
Macaulay matrices.
We have used Algorithm \ref{algo:trop LUP} to reduce
the Macaulay matrices. So-called tropical row-echelon forms
(Algorithm 3.2.2 and 3.7.3 of \cite{Vaccon:2015}) are also possible,
enabling a trade-off between speed, thoroughness of the reduction
and loss in precision.

\section{Numerical experiments}
\label{sec:Num_exp}

A toy implementation of our algorithms 
in Sagemath \cite{Sage} 
is available
on \url{https://gist.github.com/TristanVaccon}.
We have gathered some numerical results in the following arrays. Timings are in seconds of CPU time.\footnote{Everything was performed on a Ubuntu 16.04
with 2 processors of 2.6GHz and 16 GB of RAM.}

\subsection{Benchmarks}

Here, the base field is $\mathbb{Q}$ with $2$-adic valuation.
We have applied 
the tropical F5 algorithm, Algorithm \ref{F5_algo_complete},
an iterative tropical F5, and a tropical F4 algorithm on
the Katsura $n$ and Cyclic $n$ systems for 
varying $n.$ Dots mean no conclusion in decent time.

\hspace{-.5cm}
\begin{tabular}{|c|c|c|c|c|c|c|c|}
\hline 
w=[0,\dots,0] & Katsura 4 & 5 & 6 & 7 & Cyclic 4 & 5 & 6 \\ 
\hline 
Trop F5 & .16 & 1.2& 1371 & • &  0.4 & 21& •   \\
\hline 
Iterative trop F5 & 0.3 & 1.9 &1172 & •&0.4  &21 &  • \\ 
\hline 
Trop F4 &.5& 5& 30 &• & 1.7   & 112 &    • \\ 
\hline 
\end{tabular} 

\hspace{-.5cm}
\begin{tabular}{|c|c|c|c|c|c|c|c|}
\hline 
$w=[(-2)^{i-1}]$ & Katsura  4 & 5 & 6 & 7 & Cyclic 4 & 5 & 6 \\ 
\hline 
Trop F5 & 0.15  &  0.8&17  & • & 0.18 & 11 & • \\
\hline 
Iterative trop F5 &0.18 &  1.1&20  & • & 0.18 & 11 & • \\ 
\hline 
Trop F4 &  0.2  &  1.7&  15& • &1  & 65 & • \\ 
\hline 
\end{tabular}

\subsection{Trop. F5+FGLM}

For a given $p,$ we take three polynomials with random coefficients in $\mathbb{Z}_p$
(using the Haar measure)
in $\mathbb{Q}_p[x,y,z]$ of degree $2 \leq d_1 \leq d_2 \leq d_3 \leq 4.$
We first compute a tropical Gröbner basis
for the weight $w=[0,0,0]$\footnote{Efficiency of this
choice regarding to the loss in precision was studied in the extended version of  \cite{Vaccon:2015}} and the grevlex monomial
ordering, and then apply an FGLM algorithm
(tropical to classical as in Chapter 9 of 
\cite{Vaccon:these})
to obtain a lex GB.
For any given choice of $d_i$'s, we repeat the experiment 50 times.
Coefficients of the initial polynomials are all given at some high-enough
precision $O(p^N)$ for no precision issue
to appear.
We can not provide a certificate on the monomials 
of the output basis though.
Results are compiled in the following arrays.

Firstly, an array for timings given as couples: average of the timings for
the tropical F5 part and for the FGLM part, with $D=d_1+d_2+d_3-2,$ the Macaulay bound.
We add that for $p=2,3,$ there is often a huge 
standard deviation 
on the timings of the F5 part.

\begin{small}

\hspace{-.7cm}
\begin{tabular}{|l|c|c|c|c|c|c|c|c|c|c|c|c|}
\hline 
 & \multicolumn{2}{|c|}{ $D=4$} & \multicolumn{2}{|c|}{5} & \multicolumn{2}{|c|}{6}
 & \multicolumn{2}{|c|}{7} & \multicolumn{2}{|c|}{8} & \multicolumn{2}{|c|}{9}   \\ 
\hline 
$p=2$ &  .7&0.2 & 2.5&0.5 & 18&2.3 & 300&11 & 50&37 & 145&138 \\ 
\hline 
3 &  .8&.2  &  .9&.5  &  4&2  &  9&11  &  16&37 &  80&144  \\ 
\hline 
101 &  0.3&.2  &  .5&.5  &  1&2  &  3&10  &  4.6&37  &  11&150  \\ 
\hline
65519 &  .4&.2  & .6&.6 & 1.3&2.6  & 3.5&11 & 5&39  &  10&132  \\ 
\hline  
\end{tabular}

\end{small} 

Coefficients of the output tropical GB or classical GB
are known at individual precision $O(p^{N-m}).$
We compute the total mean and max on those $m$'s on the obtained GB.
Results are compiled in the following array as couples of mean and max.
The first array is for the F5 part and the second for the precision
on the final result.

\begin{small}
\hspace{-.5cm}
\begin{tabular}{|l|c|c|c|c|c|c|c|c|c|c|c|c|}
\hline 
 & \multicolumn{2}{|c|}{ $D=4$} & \multicolumn{2}{|c|}{5} & \multicolumn{2}{|c|}{6}
 & \multicolumn{2}{|c|}{7} & \multicolumn{2}{|c|}{8} & \multicolumn{2}{|c|}{9}  \\ 
\hline 
$p=2$ &  1.3&13 & 1.3&13 & 1.3&14 & 1.5&13 & 1.4&17 & 1.3&15 \\ 
\hline 
3 &  .6&6  &  .7&8  &  .7&7  &  .6&7  &  .6&7 &  .6&10 \\ 
\hline 
101 &  0&1  &  0&1  &  0&1  &  0&2  &  0&2  &  0&1   \\ 
\hline
65519 &  0&0  & 0&0 & 0&1  & 0&0 & 0&0  &  0&0   \\ 
\hline  
\end{tabular}

\hspace{-.5cm}
\begin{tabular}{|l|c|c|c|c|c|c|c|c|c|c|c|c|c|c|}
\hline 
& \multicolumn{2}{|c|}{ $D=4$} & \multicolumn{2}{|c|}{5} & \multicolumn{2}{|c|}{6}
 & \multicolumn{2}{|c|}{7} & \multicolumn{2}{|c|}{8} & \multicolumn{2}{|c|}{9}  \\ 
\hline 
$p=2$ &  8&71 &17&170 & 58&393 & 167&913 & 290&1600 & 570&3900 \\ 
\hline 
3 &  5&38  &  13&114  &  27&230  &  81&640  &  167&1600  &  430&3100  \\ 
\hline 
101 & .2&11  &  0&2  &  1.3&80  &  4&210  &  8&407  &  0&2   \\ 
\hline
65519 &  0&0  & 0&0 & 0&0  & 0&0 & 0&0  &  0&0  \\ 
\hline  
\end{tabular}
\end{small}
Most of the loss in precision appears in the FGLM part.
In comparison, the F5 part is quite stable, and hence,
our goal is achieved.

\begin{scriptsize}

\bibliographystyle{plain}

\end{scriptsize}

\end{document}